\newcommand{\bax}{\bar{x}}
\newcommand{\ginfB}{g_\infty(B)}
\begin{document}
\makeatletter
\newcommand{\thickhline}{%
    \noalign {\ifnum 0=`}\fi \hrule height 1.5pt
    \futurelet \reserved@a \@xhline
}
\newcolumntype{"}{@{\hskip\tabcolsep\vrule width 1pt\hskip\tabcolsep}}
\makeatother

\newcommand{\Z}{\mathbb{Z}}
\newcommand{\R}{\mathbb{R}}
\newcommand{\Oh}[1]{O(#1)}
\newcommand{\Ohtilde}[1]{\tilde{O}(#1)}
\newcommand{\Ah}{\mathcal{A}}
\newcommand{\poly}[1]{\text{poly}(#1)}
\newcommand{\etal}{et~al.\xspace}
\newcommand{\oh}[1]{o\left(#1\right)}
\newcommand{\Omg}[1]{\Omega\left(#1\right)}
\newcommand{\norm}[1]
{\left\lVert#1\right\rVert}
\newcommand{\eps}{\varepsilon}
\newcommand{\I}{\mathcal{I}}
\newcommand{\closedinterval}[1]{\left[#1\right]}

\newcommand*{\shortautoref}[1]{%
  \begingroup
    \def\sectionautorefname{Sec.}%
    \def\theoremautorefname{Thm.}%
    \def\corollaryautorefname{Cor.}%
    \def\lemmaautorefname{Lem.}%
    \def\propositionautorefname{Prop.}%
    \autoref{#1}%
  \endgroup
}

\def\pmax{p_{\max}} 
\def\smax{s_{\max}} 
\def\dmax{d_{\max}} 
\def\wmax{w_{\max}} 
\def\nmax{n_{\max}} 
\def\m{m} 
\def\n{\norm{n}_1} 
\def\opt{\mathrm{OPT}}

\renewcommand{\matrix}[1]{\begin{pmatrix}#1\end{pmatrix}}

\newcommand{\rowsA}{s}
\newcommand{\colsA}{r}
\newcommand{\rowsB}{\rowsA}
\newcommand{\colsB}{t}

\newcommand{\KS}{\textsc{Knapsack}\xspace}
\newcommand{\UKS}{\textsc{Unbounded-Knapsack}\xspace}
\newcommand{\SSS}{\textsc{SubsetSum}\xspace}
\newcommand{\MDKS}{\textsc{Multidimensional-Knapsack}\xspace}
\newcommand{\twostage}{\textsc{2-Stage Stochastic ILP}\xspace}
\newcommand{\nfold}{\textsc{$n$-Fold ILP}\xspace}
\newcommand{\lp}{\textsc{Feasibility LP}\xspace}
\newcommand{\N}{\mathbb{N}}
\newcommand{\Nwithzero}{\mathbb{N}_0}
\newcommand{\Nwithoutzero}{\mathbb{N}_{>0}}
\newcommand{\mswbp}{\textsc{MinSum-Weighted-BinPacking}\xspace}
\newcommand{\PART}{\textsc{Partition}\xspace}

\newcommand{\zero}{\textbf{0}}

\newcommand{\lr}[1]{\multicolumn{1}{|c}{#1}}
\newcommand{\rr}[1]{\multicolumn{1}{c|}{#1}}
\newcommand{\lrr}[1]{\multicolumn{1}{|c|}{#1}}

\newcommand{\one}{1}
\newcommand{\two}{P2}
\newcommand{\three}{P3}
\newcommand{\four}{P4}
\newcommand{\fixed}{Pm}
\renewcommand{\P}{P}
\newcommand{\Q}{Q}
\newcommand{\unrel}{R}

\newcommand{\size}{size}
\newcommand{\any}{any}
\newcommand{\rej}{Rej\leq Q}
\newcommand{\rj}{r_{j}}
\newcommand{\jobtypes}{d} 
\renewcommand{\dj}{d_{j}}
\newcommand{\bounded}{x_{i,j}\leq c}
\newcommand{\class}{\textup{class}}
\newcommand{\setup}{s_{j}}
\newcommand{\capacity}{\textup{cap}}
\renewcommand{\vec}{\textup{vec}}

\newcommand{\cmax}{C_{\max}}
\newcommand{\cmin}{C_{\min}}
\newcommand{\envy}{C_{\textup{envy}}}
\newcommand{\lmax}{L_{\max}}
\newcommand{\tmax}{T_{\max}}
\newcommand{\fmax}{F_{\max}}
\newcommand{\sumc}{\sum C_{j}}
\newcommand{\sumu}{\sum U_{j}}
\newcommand{\sumv}{\sum V_{j}}
\newcommand{\sumt}{\sum T_{j}}
\newcommand{\sumf}{\sum F_{j}}
\newcommand{\sumwu}{\sum w_{j}U_{j}}
\newcommand{\sumwv}{\sum w_{j}V_{j}}
\newcommand{\sumwt}{\sum w_{j}T_{j}}
\newcommand{\sumwc}{\sum w_{j}C_{j}}
\newcommand{\sumwf}{\sum w_{j}F_{j}}
\newcommand{\elltwo}{\ell_2}

\newcommand{\tf}[3]{\({#1|#2|#3}\)}

\newcommand{\assip}{\textsc{Assignment ILP}\xspace}
\newcommand{\confip}{\textsc{confILP}\xspace}
\newcommand{\confips}{\textsc{confILP}s\xspace}

\def\mlm{m} 
\def\nlm{\n} 
\def\M{\closedinterval{\mlm}} 
\def\J{\closedinterval{\nlm}} 
\def\jk{\mathcal{J}_k} 
\def\D{\closedinterval{\jobtypes}} 
\newcommand{\slotmilpdelta}{\operatorname{slot-MILP^{\delta}}}
\newcommand{\slotmilp}{\operatorname{slot-MILP}}
\newcommand{\enc}[1]{\langle#1\rangle}
\newcommand{\epsfrac}{\frac{1}{\eps}}
\newcommand{\fromvertex}{v_1}
\newcommand{\tovertex}{v_2}
\newcommand{\thirdvertex}{v_3}

\newcommand{\preprocessingrt}{d}
\newcommand{\dprt}{\left( m \cdot \jobtypes \cdot \pmax \right)^{\O(\epsfrac)}}
\newcommand{\swaprt}{\epsfrac \cdot m^2 + \log(\jobtypes \cdot \pmax)}
\newcommand{\localsearchrt}{m^3 \cdot \jobtypes^3 \cdot \pmax \cdot \epsfrac \cdot \log(m \cdot \jobtypes \cdot \pmax)}
\newcommand{\longadditiveapproxrt}{\Oh{\dprt}}
\newcommand{\additiveapproxrt}{ (m \cdot \pmax) ^{\Oh{\epsfrac}}}
\newcommand{\originaladditiveapproxrt}{ m \cdot \n^{\Oh{\epsfrac}}}

\renewcommand{\det}[1]{\text{det}\left(#1\right)}

\newcommand{\overbar}[1]{\mkern 1.5mu\overline{\mkern-1.5mu#1\mkern-1.5mu}\mkern 1.5mu}

\newcommand{\OPT}{\operatorname{\text{\textsc{opt}}}}

\newenvironment{myproblem}%
{%
  \leavevmode\nobreak\par
\begin{samepage}
	\begin{list}%
		{}%
		{%
			\def\labelstyle{\itshape}
			\setlength{\topsep}{0pt}%
			\renewcommand\makelabel[1]{%
				\mbox{\normalfont ##1}\hfil
			}%
			\settowidth{\labelwidth}{\labelstyle Parameter:}%
			\setlength{\leftmargin}{\labelwidth}%
			\addtolength{\leftmargin}{\labelsep}%
			\setlength{\itemsep}{0pt}%
			\setlength{\parsep}{0pt}%
		}%
		\def\instance{\item[\labelstyle Instance:]}%
    \def\question{\item[\labelstyle Question:]}%
    \def\task{\item[\labelstyle Task:]}%
		\def\result{\item[\labelstyle Result:]}%
	}%
	{%
	\end{list}%
\end{samepage}
}

\newcommand{\mpqr}{M-PQ-R\xspace}
\newcommand{\mpqmr}{M-PQm-R\xspace}
\newcommand{\mpqrs}{M-PQ-Rs\xspace}
\newcommand{\mpqmrs}{M-PQm-Rs\xspace}
\newcommand{\mpqmmrs}{M-PQ(m)-Rs\xspace}
\newcommand{\mpqmmr}{M-PQ(m)-R\xspace}

\newcommand{\pqr}{PQ-R\xspace}
\newcommand{\pqmr}{PQm-R\xspace}
\newcommand{\pqrs}{PQ-Rs\xspace}
\newcommand{\pqmrs}{PQm-Rs\xspace}
\newcommand{\pqmmrs}{PQ(m)-Rs\xspace}
\newcommand{\pqmmr}{PQ(m)-R\xspace}

\newcommand{\CS}{\textsc{CuttingStock}\xspace}
\newcommand{\BP}{\textsc{BinPacking}\xspace}

\newcommand{\powIteri}{{(i)}}
\newcommand{\powIterell}{{(\ell)}}
\newcommand{\powIterZero}{{(0)}}
\newcommand{\powIterOne}{{(1)}}
\newcommand{\powIterTwo}{{(2)}}
\newcommand{\powIterCalI}{{(\calI)}}
\newcommand{\powIteriminOne}{{(i-1)}}
\newcommand{\powIteriplusOne}{{(i+1)}}
\newcommand{\calI}{\mathcal{I}}
\newcommand{\calIFormula}{
        \begin{cases}
            \log \big(\frac{m_{\max} + \maxSupp}{2\maxSupp+1} \big) +2, &\txtif \log \big(\frac{m_{\max} + \maxSupp}{2\maxSupp+1} \big) \text{ is integer}\\
            \big\lceil\log \big(\frac{m_{\max} + \maxSupp}{2\maxSupp+1} \big)\big\rceil +1, &\txtotherwise.
        \end{cases}}
\newcommand{\calIkFormula}{
        \begin{cases}
            \log \big(\frac{m_k + \maxSupp}{2\maxSupp+1} \big) +2, &\txtif \log \big(\frac{m_k + \maxSupp}{2\maxSupp+1} \big) \text{ is integer}\\
            \big\lceil\log \big(\frac{m_k + \maxSupp}{2\maxSupp+1} \big)\big\rceil +1, &\txtotherwise.
        \end{cases}}
\newcommand{\calO}{O}
\newcommand{\calL}{\mathcal{L}}
\newcommand{\calA}{\mathcal{A}}
\newcommand{\calC}{\mathcal{C}}
\newcommand{\DeltaBound}{\pmax^{\calO(1)}}
\newcommand{\pmaxD}{\pmax^\Od}
\newcommand{\pmaxOne}{\pmax^\Oone}
\newcommand{\Od}{{\calO(d)}}
\newcommand{\Oone}{{\calO(1)}}
\newcommand{\trueCapital}{\texttt{true}}
\newcommand{\falseCapital}{\texttt{false}}
\newcommand{\nullCapital}{\texttt{null}}
\newcommand{\lcm}{\textup{lcm}}
\newcommand{\polyI}{|\calL|^{O(1)}}
\newcommand{\Cpp}{C\nolinebreak\hspace{-.05em}\raisebox{.4ex}{\tiny\bf +}\nolinebreak\hspace{-.10em}\raisebox{.4ex}{\tiny\bf +}}

\newcommand{\supp}{\textup{supp}}
\newcommand{\maxSupp}{K}
\newcommand{\maxSuppFormula}{\lfloor 2 \cdot (d + 1)\cdot \log(4 \cdot (d + 1) \cdot \Delta) \rfloor}

\newcommand{\txtotherwise}{\textup{otherwise}}

\newcommand{\Cmax}{C_{\max}}
\newcommand{\Cmin}{C_{\min}}
\newcommand{\Cenvy}{C_{\textup{envy}}}

\newcommand{\PCmax}{P\|\Cmax}
\newcommand{\PCmin}{P\|\Cmin}
\newcommand{\PCenvy}{P\|\Cenvy}
\newcommand{\PCmaxminenvy}{P\|\{\Cmax,\Cmin,\Cenvy\}}
\newcommand{\QCmax}{Q\|\Cmax}
\newcommand{\QCmin}{Q\|\Cmin}
\newcommand{\QCenvy}{Q\|\Cenvy}
\newcommand{\QCmaxminenvy}{Q\|\{\Cmax,\Cmin,\Cenvy\}}
\newcommand{\RCmax}{R\|\Cmax}
\newcommand{\RCmin}{R\|\Cmin}
\newcommand{\RCenvy}{R\|\Cenvy}
\newcommand{\RCmaxminenvy}{R\|\{\Cmax,\Cmin,\Cenvy\}}

\newcommand{\tiln}{\tilde{n}}
\newcommand{\tilm}{\tilde{m}}
\newcommand{\tilx}{\tilde{x}}
\newcommand{\tilcalL}{\tilde{\calL}}
\newcommand{\hatn}{\hat{n}}
\newcommand{\hatm}{\hat{m}}
\newcommand{\hatx}{\hat{x}}
\newcommand{\hatcalL}{\hat{\calL}}

\newcommand{\RTdp}{\text{RT}_\text{DP}}
\newcommand{\RTpcmax}{\text{RT}_{\PCmax}}
\newcommand{\RTmainalgo}{\text{RT}_\text{feasibility}}

\newcommand{\RTpcmaxO}{\pmaxD \cdot \log(\|n\|_1) + \calO(d)}\newcommand{\RTdpO}{\calO(\tau \cdot (\nu + 1)^d \cdot \RTpcmax) +\tau^2 \cdot (\nu +1)^{\calO(d)} \cdot \log(\tau \cdot (\nu + 1)^d)}

\newcommand{\Pf}{P\|f}
\newcommand{\RTidentical}{\text{RT}_{\Pf}}

\newcommand{\RTopt}{\text{RT}_{\opt}}

\newcommand{\optLf}{\opt_{\calL,f}}
\newcommand{\optn}{\opt_{n}}
\newcommand{\optnOne}{\opt_{n^\powIterOne}}
\newcommand{\optnTwo}{\opt_{n^\powIterTwo}}
\newcommand{\optni}{\opt_{n^\powIteri}}
\newcommand{\optniminOne}{\opt_{n^\powIteriminOne}}
\newcommand{\optnCali}{\opt_{n^\powIterCalI}}
\newcommand{\optq}{\opt_{q}}

\newcommand{\optLfOne}{\opt_{\calL^\powIterOne,f}}
\newcommand{\optLfi}{\opt_{\calL^\powIteri,f}}
\newcommand{\optLfiminone}{\opt_{\calL^\powIteriminOne,f}}
\newcommand{\optLfhat}{\opt_{\hat{\calL}^\powIteri,f}}
\newcommand{\optLftil}{\opt_{\tilde{\calL}^\powIteri,f}}
\newcommand{\optPff}{\opt_{\Pf,f}}

\newcommand{\txtif}{\textup{if }}
\newcommand{\txtother}{\textup{otherwise }}

\newcommand{\smin}{s_{\min}}
\newcommand{\pmin}{p_{\min}}
\newcommand{\nmin}{n_{\min}}
\newcommand{\mmax}{m_{\max}}
\newcommand{\mmin}{m_{\min}}

\newcommand{\Cavg}{C_{\textup{avg}}}

\newcommand{\LB}{\textsc{LoadBalancing}\xspace}

\newcommand{\nullvec}{\mathbf{O}}
\newcommand{\onevec}{\mathbf{1}}

\newcommand{\ZZ}{\mathbb{Z}}
\newcommand{\ZZgeqzero}{\N}
\newcommand{\ZZgeqone}{\mathbb{Z}_{\geq 0}}

\makeatletter
\AfterEndEnvironment{algorithm}{\let\@algcomment\relax}
\AtEndEnvironment{algorithm}{\kern2pt\hrule\relax\vskip3pt\@algcomment}
\let\@algcomment\relax
\newcommand\algcomment[1]{\def\@algcomment{\footnotesize#1}}

\renewcommand\fs@ruled{\def\@fs@cfont{\bfseries}\let\@fs@capt\floatc@ruled
  \def\@fs@pre{\hrule height.8pt depth0pt \kern2pt}%
  \def\@fs@post{}%
  \def\@fs@mid{\kern2pt\hrule\kern2pt}%
  \let\@fs@iftopcapt\iftrue}
\makeatother

\tikzset{%
	>={Latex[width=2mm,length=2mm]},
	base/.style = {rectangle, rounded corners, draw=black,
		minimum width=4cm, minimum height=1cm,
		text centered, font=\sffamily},
	activityStarts/.style = {base, fill=blue!30},
	startstop/.style = {base, fill=red!30},
	activityRuns/.style = {base, fill=subbi!30},
	process/.style = {base, minimum width=2.5cm, fill=orange!15,
		font=\ttfamily},
}
\tikzset{toprule/.style={%
		execute at end cell={%
			\draw [line cap=rect,#1] (\tikzmatrixname-\the\pgfmatrixcurrentrow-\the\pgfmatrixcurrentcolumn.north west) -- (\tikzmatrixname-\the\pgfmatrixcurrentrow-\the\pgfmatrixcurrentcolumn.north east);%
		}
	},
	bottomrule/.style={%
		execute at end cell={%
			\draw [line cap=rect,#1] (\tikzmatrixname-\the\pgfmatrixcurrentrow-\the\pgfmatrixcurrentcolumn.south west) -- (\tikzmatrixname-\the\pgfmatrixcurrentrow-\the\pgfmatrixcurrentcolumn.south east);%
		}
	}
}
\newcommand{\algorithmautorefname}{Algorithm}

\newfloat{algorithm}{t}{lop}
\floatname{algorithm}{Algorithm}

\tikzset{
    cross/.pic = {
    \draw[rotate = 45] (-#1,0) -- (#1,0);
    \draw[rotate = 45] (0,-#1) -- (0, #1);
    }
}

\title{Equivalent Instances for Scheduling and Packing Problems}

%
%
\author{Klaus Jansen\inst{1}\orcidID{0000-0001-8358-6796} \and
Kai Kahler\inst{1}\orcidID{0000-0002-8066-4004} \and
Corinna Wambsganz\inst{1}\orcidID{0009-0002-4820-9017}}
\authorrunning{K. Jansen et al.}
%
\institute{Kiel University, Department of Computer Science,  Germany
\email{\{kj,cwa\}@informatik.uni-kiel.de},
\email{kai.kahler@web.de}
}
\maketitle              
\begin{abstract}
Two instances $(I,k)$ and $(I',k')$ of a parameterized problem $P$ are equivalent if they have the same set of solutions (\textit{static} equivalent) or if the set of solutions of $(I,k)$ can be constructed by the set of solutions for $(I',k')$ and some computable pre-solutions. If the algorithm constructing such a (static) equivalent instance whose size is polynomial bounded runs in \textit{fixed-parameter tractable (FPT)} time, we say that there exists a \textit{(static) equivalent instance} for this problem. If the algorithm runs in polynomial time, does not have to guarantee the same set of solutions and the output size of the algorithm is upper bounded by $g(k)$ for some computable function $g$, the algorithm is a \textit{kernel} for this problem. 

In this paper we present (static) equivalent instances for Scheduling and Knapsack problems. We improve the bound for the $\ell_1$-norm of an equivalent vector given by Eisenbrand, Hunkenschröder, Klein, Koutecký, Levin, and Onn and show how this yields equivalent instances for \textit{integer linear programs (ILPs)} and related problems. We obtain an 

$O(MN^2\log(NU))$ static equivalent instance for feasibility ILPs where $M$ is the number of constraints, $N$ is the number of variables and $U$ is an upper bound for the $\ell_\infty$-norm of the smallest feasible solution. With this, we get an $O(n^2\log(n))$ static equivalent instance for Knapsack where $n$ is the number of items. Moreover, we give an $O(M^2N\log(NM\Delta))$ kernel for feasibility ILPs where $\Delta$ is an upper bound for the $\ell_\infty$-norm of the given constraint matrix.

Using balancing results by Knop et al., the \textsc{ConfILP} and a proximity result by Eisenbrand and Weismantel we give an $O(d^2\log(p_{\max}))$ equivalent instance for \textsc{LoadBalancing}, a generalization of scheduling problems. Here $d$ is the number of different processing times and $p_{\max}$ is the largest processing time. 

\keywords{Parameterized Complexity \and Scheduling \and Knapsack.}
\end{abstract}
\section{Introduction}
Let $P$ be a parameterized problem and let $(I,k)$ be an instance of $P$. An \textit{static equivalent instance} is an instance $(I',k')$ of $P$ that has exactly the same solutions as $(I,k)$. If we construct an instance $(I',k')$ such that the solutions of $(I,k)$ consist of some computable pre-solutions combined with solutions of $(I',k')$, we denote $(I,k)$ as equivalent instance. In this paper, we allow algorithms that construct (static) equivalent instances to run in \textit{fixed-parameter tractable (FPT)} time. If we require polynomial running time, but only the equivalence $(I,k) \in P$ if and only if $(I',k') \in P$, we get the definition of a kernel. A \textit{kernel} for a parameterized problem $P$ is an algorithm $\mathcal{A}$ that, given an instance $(I,k)$ of $P$, runs in polynomial time and returns an instance $(I',k')$ of $P$ with $(I,k) \in P$ if and only if $(I',k') \in P$. Moreover, the output size of $\mathcal{A}$ is required to be upper bounded by $g(k)$ for some computable function $g$\ \cite{CFKLMPPS15}. The study of kernelization is an important part of parameterized complexity. A decidable problem admits a kernel if and only if it is FPT~\cite{CFKLMPPS15}. For this reason, studies focus on finding a kernel or showing that a problem does not admit a kernel. There have been many new results and new techniques in recent years~\cite{DBLP:journals/eatcs/Kratsch14,DBLP:conf/birthday/LokshtanovMS12}.
This paper proves the existence of small static equivalent instances for \KS and related problems. We improve the bound for the $\ell_1$-norm of an equivalent vector of Eisenbrand, Hunkenschröder, Klein, Koutecký, Levin and Onn~\cite{EHKKLO23} from $N(4N\Delta)^{N-1}$ to $N(2\sqrt{N}\Delta)^{N-1}$ and show how this yields static equivalent instances for ILPs and related problems. For future work, we leave open the question whether the algorithm for the static equivalent instances could be adapted to run in polynomial time and thus, whether this leads to small kernels with the same solution set. Alternatively, we obtain a kernel for feasibility ILPs using a proximity result by Eisenbrand and Weismantel~\cite{EW19}. Moreover, using this proximity result, a balancing result by Knop, Koutecký, Levin, Mnich and Onn~\cite{KKLMO23} and the \confip we give an equivalent instance for \LB on identical machines. After the problem definitions, we give an overview of related work and our results. In \Cref{tab:results} our (static) equivalent instance results are compared to known kernelization results.

\subsection{Problem Definitions}
We denote by $[n] := \{i\ |\ 1 \leq i \leq n\}$. First, we define (static) equivalent instances, the main concept of this paper.
\begin{definition}
    An instance $(I',k')$ of a parameterized problem $P$ is equivalent to an instance $(I,k)$ of this problem if and only if we have for all inputs $x$, for some computable pre-solution $p(x)$ and for some computable function $f$: 
    
    $x=f(p(x),x')$ is a solution of $(I',k')$ if and only if $x'$ is a solution of $(I,k)$.
\end{definition}
If $f$ is the projection on the second argument, thus $f(p(x),x')=x'$ and $x=x'$, we say that the corresponding equivalent instance is a \textit{static} equivalent instance. Note that algorithms which construct (static) equivalent instances of polynomial size are allowed to run in FPT time here, but they preserve the set of valid solutions (exactly). However, if such an algorithm runs in polynomial time, has limited output size and does not necessarily keep a valid solution, this algorithm is a \textit{kernel}:

\begin{definition}[\cite{CFKLMPPS15}]
    A kernelization algorithm, or simply a kernel, for a parameterized problem $P$ is an algorithm $\mathcal{A}$ that, given an instance $(I,k)$ of $P$, runs in polynomial time and returns an instance $(I',k')$ of $P$ with $(I,k) \in P$ if and only if $(I',k') \in P$. Moreover, we require that $\text{size}_{\mathcal{A}} \leq g(k)$ for some computable function $g: \mathbb{N} \rightarrow \mathbb{N}$ where $\text{size}_{\mathcal{A}}$ is the output size of the algorithm $\mathcal{A}$.
\end{definition}
In the following we define ILPs and variants of these.
\begin{definition}
    An ILP is a problem of the following form: 
    $$\max\{c^Tx\ |\ \mathcal{A}x = b, l \leq x \leq u, x \in \mathbb{Z}^N\}$$
    for a matrix $\mathcal{A} \in \mathbb{Z}^{M \times N}$, a right hand side $b \in \mathbb{Z}^M$, a cost vector $c \in \mathbb{Z}^N$ and lower and upper bounds $l,u \in \mathbb{Z}^N$. If $c = \mathbf{0}$, then the problem is a feasibility ILP.
\end{definition}

\begin{definition}
    A \KS instance consists of an item set $I$, weights $w_i$ and profits $p_i$ for $i \in I$ and a capacity $C \in \mathbb{N}$. A solution is a subset $I' \subseteq I$ such that $\sum_{i \in I'}{w_i} \leq C$ and $\sum_{i \in I'}{p_i} \geq T$ for a target value $T$.
\end{definition}

This problem is commonly referred to as \textsc{0-1-Knapsack}. If items have an $M$-dimensional weight vector, the problem is referred to as \MDKS. 
\begin{definition}
    A \MDKS instance consists of an item set $I$, weight vectors $w^{(i)}, i \in I$, profits $p_i, i \in I$, and a capacity vector $b \in \mathbb{N}^M$. A solution is a subset $I' \subseteq I$ such that $\sum_{i \in I'}{w^{(i)}_j} \leq b_j$ for all $j \in [M]$  and $\sum_{i \in I'}{p_i} \geq T$ for a target value $T$.
\end{definition}
In the \UKS problem, items may be taken an unlimited number of times. The \SSS problem is a special case of \KS where all items $i \in I$ have profit $p_i = w_i$ and the capacity must be met exactly. In order to introduce the next problem, we first define some terms in the context of scheduling. In scheduling problems we consider, we are given a number of jobs and identical machines. Each job has a processing time. There are $d \in \mathbb{N}$ different processing times (types). The processing time vector $p \in \mathbb{N}^d_{>0}$ describes the processing times and the job vector $n \in \mathbb{N}^d_{>0}$ describes how many jobs of each type are given. Our goal is to assign each job to one machine such that some objective function gets minimized. The assignment of all jobs to machines (one machine per job) is called a \textit{schedule}. In a specific schedule each machine has a set of jobs assigned. The sum of all the processing times of the jobs assigned to a specific machine is the \textit{load} of this machine.

\begin{definition}
    A \LB on identical machines instance consists of a processing time vector $p \in \mathbb{N}^d_{>0}$, a job vector $n \in \mathbb{N}^d_{>0}$, a number of machines $m \in \mathbb{N}_{>0}$, and thresholds $l,u \in \mathbb{N}$. The task is to decide whether there is a schedule in which each machine has a load that lies in $[l,u]$.
\end{definition}

The solution of a \LB instance is a schedule in which each machine has a load that lies in $[l,u]$, if there exists such a schedule or a schedule in which at least one machine has a load that lies not in $[l,u]$, otherwise. In the context of decision problems, \tf{\P}{}{\cmax} and \tf{\P}{}{\cmin} are special cases of \LB. For \tf{\P}{}{\cmax}, we have a lower bound $l = 0$, and for \tf{\P}{}{\cmin}, we have an upper bound of $u = \infty$. The problem \tf{\P}{}{\envy} describes the objective to minimize the difference between the maximal and minimal load of a schedule. The maximal load of a schedule is the maximal load of a machine in this schedule. Finally, we define equivalent vectors.

\begin{definition}[Equivalent Vectors]
    We say that two vectors $w,\bar{w}\in\R^N$ are equivalent on a set $S\subset\R^N$ if and only if for all $x,y\in S$,
    
    $w^Tx\geq w^Ty\iff \bar{w}^Tx\geq\bar{w}^Ty$. 
\end{definition}

\subsection{Related Work}
Harnik and Naor~\cite{DBLP:journals/siamcomp/HarnikN10} showed that reducing all input numbers of \SSS modulo a random prime of magnitude about $2^{2n}$ yields an equivalent instance with error probability exponentially small in $n$. This is a randomized kernel with size $O(n^2)$. However, the question for a deterministic polynomial kernel for \SSS with respect to the number of items remained open. Thus, Etscheidt \etal~\cite{EKMR17} gave deterministic kernel results for \KS and related problems using the coefficient reduction technique by Frank and Tardos~\cite{FT87}. In particular, they gave a kernel of size $O(n^4)$ for \KS and \SSS. They also showed the following lower bound:
\begin{theorem}[Theorem 14 in~\cite{EKMR17}]\label{thm:kernellowerbound}
\SSS and hence also \KS do not admit a kernel of size $\Oh{n^{2-\eps}}$ for any $\eps>0$, unless the \textit{Exponential Time Hypothesis (ETH)} fails.
\end{theorem}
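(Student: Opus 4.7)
The plan is a contradiction argument based on the Exponential Time Hypothesis (ETH). Suppose, toward a contradiction, that \SSS admits a kernel whose output has bit-size $\Oh{n^{2-\eps}}$ for some $\eps>0$. The strategy is to combine this compression with a reduction from \textsc{3-SAT} that is unusually compact in the number of items produced, in order to solve \textsc{3-SAT} in subexponential time, contradicting ETH.

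Start from a \textsc{3-SAT} formula $\varphi$ on $N$ variables which, by the Sparsification Lemma of Impagliazzo, Paturi and Zane, we may assume to have $\Oh{N}$ clauses. The central step is a polynomial-time reduction of $\varphi$ to an \SSS instance $\I$ that uses only $n=\Oh{\sqrt{N}}$ items, while the numerical values (and hence the input bit-length) are permitted to grow as large as necessary. Concretely, I would partition the variables into $\Theta(\sqrt{N})$ blocks of $\Theta(\sqrt{N})$ variables each and, per block, introduce a constant number of items whose digits, placed in carefully separated positional windows, simultaneously encode (i) the block's partial truth assignment, (ii) selectors forcing each variable to be set exactly once, and (iii) the satisfaction status of each clause touching the block. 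The digit windows are spaced by $\Omega(\log N)$ bits so that addition in $\ZZ$ never produces carries across blocks or between the variable- and clause-encoding regions, yielding a bijection between satisfying assignments of $\varphi$ and subsets of $\I$ meeting the target exactly.

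With $\I$ in hand, apply the assumed kernel to obtain an equivalent \SSS instance $\I'$ of bit-length $\Oh{n^{2-\eps}}=\Oh{N^{1-\eps/2}}$. The number of items in $\I'$ is bounded by its bit-length, so brute-force enumeration over all subsets decides $\I'$ in time $2^{\Oh{N^{1-\eps/2}}}=2^{\oh{N}}$. Composing the reduction, the kernel and the solver yields a $2^{\oh{N}}$-time algorithm for \textsc{3-SAT}, contradicting ETH. Since \SSS is the special case of \KS with $p_i=w_i$ and an exact target, the same lower bound carries over to \KS.

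The main obstacle is the compressed reduction: the textbook Karp-style reduction from \textsc{3-SAT} to \SSS contributes one item per variable and one per clause, yielding $\Theta(N)$ items, whereas we need only $\Oh{\sqrt{N}}$. Packing $\Omega(\sqrt{N})$ Boolean choices into a single subset-sum item without letting positional carries create spurious satisfying sums is the delicate engineering part, and it forces the item values to grow to roughly $2^{\Theta(N)}$; once such a compressed reduction is built, in the spirit of dense subset-sum constructions, the remaining composition with the kernel and the brute-force enumeration is routine.
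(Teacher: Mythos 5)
This theorem is quoted from Etscheidt et al.~\cite{EKMR17} as a known result and the paper does not reprove it, so there is no in-paper argument to compare against. Judged on its own terms, your proposal has a fatal circularity. The polynomial-time reduction you ask for, taking \textsc{3-SAT} on $N$ variables to \SSS with only $n=\Oh{\sqrt{N}}$ items, cannot exist if ETH holds: \SSS on $n$ items is decidable in time $2^{n}\cdot\text{poly}$ by brute force over subsets, or $2^{n/2}\cdot\text{poly}$ by meet-in-the-middle, with only polynomial dependence on the bit-length of the numbers. Composing such a reduction with either solver would already give a $2^{\Oh{\sqrt{N}}}$-time algorithm for \textsc{3-SAT} and refute ETH outright, no kernel required; so you would be establishing an ETH-conditional lower bound by invoking a gadget whose mere existence falsifies ETH. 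The obstacle you label ``delicate engineering'' is also an information-theoretic dead end: a block of $\Theta(\sqrt{N})$ variables has $2^{\Theta(\sqrt{N})}$ partial assignments, while a constant number of items per block offers only $\Oh{1}$ bits of subset choice, so the claimed bijection between satisfying assignments and target-hitting subsets cannot exist once $n=o(N)$.

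The underlying misconception is that item count can be traded against item magnitude. For \SSS the exponential cost is governed by $n$ alone, so any ETH-hardness reduction from \textsc{3-SAT} must spend $n=\Theta(N)$ items; with that, brute-forcing a kernel of bit-size $\Oh{n^{2-\eps}}=\Oh{N^{2-\eps}}$ costs $2^{\Oh{N^{2-\eps}}}$, which is super-exponential in $N$ for every $\eps<1$ and yields no contradiction. A correct proof of the $\Oh{n^{2-\eps}}$ bit-size lower bound has to exploit what a sub-quadratic kernel would destroy --- namely the $\Theta(n)$-bit magnitudes of the $\Theta(n)$ items, i.e.\ $\Theta(n^2)$ bits of numerical data --- via a compression-style argument rather than by shrinking the item count and enumerating subsets. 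The sketch cannot be repaired along the lines you propose.
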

Kernels for \KS have also been studied by Heeger \etal~\cite{HHMS24}, who showed that one cannot (under the standard complexity assumption NP $\nsubseteq$ coNP/poly) compute a kernel of size $a_{\#}^{O(1)}$ or $v_{\#}^{O(1)}$ in polynomial time, where $a_{\#}$ and $v_{\#}$ are the numbers of different item sizes and values, respectively. They also showed that there is a kernel of size $(a_{\#}+v_{\#})^{O(1)}$ that can be computed in polynomial time. Gurski \etal~\cite{GRR19} used a technique by Frank and Tardos~\cite{FT87} to obtain kernels for \KS parameterized by several different parameters, one of which is the number of items $n$. This result is the same that Etscheidt \etal~\cite{EKMR17} obtained, with a kernel of size $O(n^4)$. Kratsch~\cite{DBLP:conf/esa/Kratsch13} showed that feasibility ILPs admit no polynomial kernel when parameterized by both the number of variables $N$ and the number of constraints $M$, unless NP $\subseteq$ coNP/poly.

Regarding \LB, Knop and Koutecký~\cite{DBLP:conf/esa/KnopK22} gave a kernel for \tf{\P}{}{\cmax} with size poly$(d,\log(\pmax))$\footnote{Their proof goes by reducing \tf{\P}{}{\cmax} to Huge $n$-fold IPs, then kernelizing that to size poly$(d,\log(\pmax))$. After that, they use the fact that there must exist a polytime reduction from Huge $n$-fold IP to \tf{\P}{}{\cmax} because both problems belong to NP. Since this last reduction is polytime, the resulting size is still poly$(d,\log(\pmax))$, even though this is not stated in the original paper. We had personal communication with the authors about that.} that can be computed in time $(p_{\max} +|I|)^{\Oh{1}}$. The proof is quite non-trivial and uses the heavy machinery of Huge $n$-fold IPs. 

The \Cref{tab:results} provides an overview of the above mentioned kernel results compared to our (static) equivalent instance results.

\subsection{Our Contribution}
First, we improve the bound for the $\ell_1$-norm of the equivalent vector from $N(4N\Delta)^{N-1}$~\cite{EHKKLO23} to $N(2\sqrt{N}\Delta)^{N-1}$. Using this bound, we prove the existence of $O(n^2\log(n))$ static equivalent instances for \KS and \SSS. So, we almost close the quadratic gap of~\Cref{thm:kernellowerbound}. However, our static equivalent instances for \KS and \SSS are obtained using the improved bound of Eisenbrand, Hunkenschröder, Klein, Koutecký, Levin and Onn~\cite{EHKKLO23}, meaning that we know that there exists an equivalent small instance with the same solution set and that we can compute it in exponential time $(n\Delta)^{\Oh{n}}$, where $\Delta$ is the largest item size or value. In contrast, the algorithm by Frank and Tardos~\cite{FT87} (and hence also the procedure by Etscheidt \etal~\cite{EKMR17}) takes only polynomial time. Note that our static equivalent instance for \KS and the kernel by Etscheidt \etal~\cite{EKMR17} do not contradict the result of Heeger \etal~\cite{HHMS24}, as the total number $n$ of items may be much larger than the number of different item sizes or values.

For feasibility ILPs we present a static equivalent instance of size $\Oh{MN^2\log(NU)}$, where $U$ is an upper bound for the $\ell_\infty$-norm of the smallest feasible solution. This yields the $O(n^2\log(n))$ static equivalent instances for \KS and \SSS, an $\Oh{n^2\log^2(C)\log(n\log(C))}$ equivalent instance for \UKS and an $\Oh{Mn^2\log(n)}$ static equivalent instance for \MDKS. The implications for special forms of ILPs ($n$-fold, $2$-stage) are listed in \Cref{sec:specialilp}. 

Moreover, we get for feasibility ILPs a kernel of size $O(M^2N\log(NM\Delta))$ where $\Delta$ is an upper bound for the $\ell_\infty$-norm of the given constraint matrix.

Our equivalent instance for \LB on identical machines only has size $O(d^2\log(\pmax))$ and, in contrast to Knop and Koutecký~\cite{DBLP:conf/esa/KnopK22}, can be computed in a direct way in time $\pmax^{O(d)}$, while the proof mainly relies on a balancing result by Knop, Koutecký, Levin, Mnich and Onn~\cite{KKLMO23}.

\begin{table}[h]
\begin{tabular}{l|c|c}
Problem & our (\textcolor{blue}{static}) equivalent instance & known kernel bound \\
\hline
\textcolor{blue}{feasibility ILPs} & $\Oh{MN^2\log(NU)}$  & $O(M(N^4 + N^3\log(NU))$\cite{EKMR17} \\
\textcolor{blue}{\KS} & $O(n^2\log(n))$ &  $\tilde{O}((w_\#\cdot p_\#)^5)$ \cite{HHMS24}, $O(n^4)$ \cite{EKMR17}, \cite{GRR19}\\
\textcolor{blue}{\SSS} & $O(n^2\log(n))$ & $O(n^4)$ \cite{EKMR17}, $O(k^4\log(k))$\cite{EKMR17} \\
\textcolor{blue}{MD-\KS} & $\Oh{Mn^2\log(n)}$  & $O(n^5)$\cite{GRR19} \\
LB, \tf{\P}{}{\{\cmax,\cmin\}} & $O(d^2\log(\pmax))$ & poly$(d,\pmax)$~\cite{DBLP:conf/esa/KnopK22}\\
\end{tabular}
\vspace{0.25cm}
\caption{Comparing our (\textcolor{blue}{static}) equivalent instance sizes with already known kernel sizes. MD-\KS denotes \MDKS and LB denotes \LB. The parameter $w_\#$ is the number of different weights and $p_\#$ is the number of different profits. The value $U$ is an upper bound for the $\ell_{\infty}$-norm of the smallest feasible solution, $M$ is the number of constraints and $k$ is the number of different item weights.}
\label{tab:results}
\end{table}

After introducing some preliminaries in \Cref{sec:preliminaries}, we present the static equivalent instances with coefficient reduction in \Cref{sec:eqcoef}. Afterwards, we show in \Cref{sec:pcmaxkernel} the equivalent instance for \LB. 

Due to space constraints, some proofs were moved into the appendix. The corresponding statements are marked with (\Rightscissors).

\section{Preliminaries}\label{sec:preliminaries}
In order to compute (static) equivalent instances, we need some theorems and some more definitions presented in this section. The following \Cref{thm:franktardos} and \Cref{cor:franktardos} describe properties of equivalent vectors.

\begin{theorem}[Frank \& Tardos~\cite{FT87}]\label{thm:franktardos}
    For every $w\in\R^N$ and $\Delta\in\N$, there exists a $\bar{w}\in\Z^N$ such that $\norm{\bar{w}}_{\infty}\leq (N\Delta)^{\Oh{N^3}}$ and $\textup{sign}(w^Tx)=\textup{sign}(\bar{w}^Tx)$ for every $x\in\Z^N$ with $\norm{x}_1\leq \Delta-1$. Moreover, $\bar{w}$ can be computed in time $N^{\Oh{1}}$.
\end{theorem}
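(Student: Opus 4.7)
The plan is to invoke the LLL algorithm for simultaneous Diophantine approximation and exploit the fact that the sign pattern of $w$ on the finite set $S:=\{x\in\Z^N:\|x\|_1\leq\Delta-1\}$ is locally constant: small enough perturbations of $w$ that respect its set of zero-relations $\{x\in S:w^Tx=0\}$ preserve all strict inequalities. In particular, one may first approximate $w$ by a rational vector inside a suitable open neighbourhood of the corresponding face and, after clearing denominators, assume $w\in\Z^N$.

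The core tool is the LLL-based simultaneous Diophantine approximation theorem: given $v\in\Z^N$ and $\eps\in(0,1)$, in time polynomial in $N$ and $\log\|v\|_\infty$ one obtains an integer $Q\geq1$ and a vector $p\in\Z^N$ with $|Qv_i-p_i|\leq\eps$ for every $i\in[N]$ and $Q\leq 2^{\Oh{N^2}}\eps^{-N}$. For every $x\in S$ this yields
\[
    |Q\cdot v^Tx-p^Tx|\;\leq\;\|x\|_1\cdot\max_i|Qv_i-p_i|\;\leq\;(\Delta-1)\eps.
\]
Choosing $\eps$ so that $(\Delta-1)\eps<1$ forces the right-hand side to be below $1$; since both $Qv^Tx$ and $p^Tx$ are integers, this means $p^Tx=0$ precisely when $v^Tx=0$, and $p^Tx$ shares the sign of $Qv^Tx$ (hence of $v^Tx$) otherwise. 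Setting $\bar w:=p$ therefore yields an equivalent integer vector on $S$ with $\|\bar w\|_\infty\leq Q\|v\|_\infty+\eps$.

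The remaining issue is to remove the factor $\|v\|_\infty$ from the norm bound, since for the stated theorem the size must depend on $N$ and $\Delta$ only. The Frank--Tardos construction accomplishes this by iterating the approximation: once $(Q,\bar w)$ has been produced, the vector $\bar w/Q$ is a rational representative of $w$ whose denominator is strictly smaller than that of the input; feeding it back into a new LLL call with a refreshed $\eps$ yields a shorter integer representative, and one argues that a chain of $\Oh{N}$ iterations suffices to reach the fixed point. Each round contributes a multiplicative factor of $2^{\Oh{N^2}}\Delta^N=(N\Delta)^{\Oh{N^2}}$ to the current sup-norm, and compounding $\Oh{N}$ such factors gives the advertised bound $\|\bar w\|_\infty\leq(N\Delta)^{\Oh{N^3}}$.

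The hard part is ensuring that the iteration never destroys a sign or zero-relation already established in a previous round; this is controlled by picking $\eps$ at each stage strictly below the inverse of the current working denominator multiplied by $(\Delta-1)$, so that the LLL error cannot flip any integer inner product across zero. Termination and the $\Oh{N}$-round bound follow from a dimension argument on the residual kernel. Because every LLL invocation runs in polynomial time and only $\Oh{N}$ of them are performed, the overall running time is $N^{\Oh{1}}$, completing the proof.
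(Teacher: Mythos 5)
The paper does not prove \Cref{thm:franktardos}; it is imported verbatim from Frank and Tardos~\cite{FT87}. So the only meaningful check is whether your sketch is a faithful reconstruction of their argument, and on that score there is a genuine gap in the mechanism.

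Your LLL ingredient and the ``$O(N)$ rounds of approximation'' framing are both in the right spirit, but two steps as you describe them would break the argument. First, the preliminary reduction ``after clearing denominators, assume $w\in\Z^N$'' is counterproductive: if $v\in\Z^N$ and you then find $p\in\Z^N$, $Q\in\Z_{>0}$ with $|Qv^Tx-p^Tx|<1$ for all $x\in S$, then since $S$ contains the standard basis (for $\Delta\geq 2$) both sides are integers and you get $p=Qv$ outright, so you have not compressed anything. The whole point of the Frank--Tardos construction is that $w$ is a genuinely real (or at least high-precision) vector, normalized to $\|w\|_\infty=1$, so that $Qw^Tx$ need not be an integer and $p$ genuinely rounds it. Second, the recursion does not feed $\bar w/Q$ back in. What is recursed on is the \emph{residual} $r:=Qw-p$, renormalized by $\|r\|_\infty$: each round produces a small-integer vector $p^{(i)}$ and a residual of strictly smaller support (or rank), which is why $O(N)$ rounds suffice. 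The output $\bar w$ is then a weighted sum $\sum_i M_i p^{(i)}$, where the scaling factors $M_i$ grow geometrically so that the sign of $w^Tx$ is read off from the first index $i$ at which $(p^{(i)})^Tx\neq 0$. Iterating on $\bar w/Q$ neither shrinks the denominator monotonically nor produces the needed lexicographic sign-recovery structure; as written the termination and sign-preservation claims do not follow. If you repair both points -- keep $w$ real and recurse on the normalized residual, assembling the answer as a weighted sum of the per-round approximants -- the norm bound $(N\Delta)^{O(N^3)}$ and the polynomial running time drop out as you indicate.
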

\begin{restatable}[Jansen \etal~\cite{JKZ24} or Etscheidt \etal~\cite{EKMR17}]{corollary}{franktardos}\label{cor:franktardos}
    For every $w\in\R^N$, $b\in\R$, $\Delta\in\N$, one can compute $\bar{w}\in\Z^N$, $\bar{b}\in\Z$ with $\norm{\bar{w}}_{\infty},|\bar{b}|\leq (N\Delta)^{\Oh{N^3}}$ in time $N^{\Oh{1}}$ such that for every $x\in[-\Delta,\Delta]^N$, $w^Tx\leq b \iff \bar{w}^Tx\leq \bar{b}$.
\end{restatable}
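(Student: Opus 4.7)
The plan is to reduce the inequality statement to the sign-preservation statement of Theorem~\ref{thm:franktardos} by lifting everything one dimension higher, using the standard trick of packaging the right-hand side $b$ into an extra coordinate.

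Concretely, I would define $w' := (w, -b) \in \R^{N+1}$ and, for each $x \in [-\Delta,\Delta]^N$, the lifted vector $x' := (x, 1) \in \Z^{N+1}$. Then $w^T x \leq b$ is equivalent to $(w')^T x' \leq 0$, which in turn is equivalent to $\textup{sign}((w')^T x') \in \{-1, 0\}$. So if I can find some $\bar{w}' \in \Z^{N+1}$ whose sign on these lifted vectors agrees with that of $w'$, then splitting $\bar{w}' = (\bar{w}, -\bar{b})$ immediately yields the desired integer vector and right-hand side, since $(\bar{w}')^T x' \leq 0 \iff \bar{w}^T x \leq \bar{b}$.

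To apply Theorem~\ref{thm:franktardos} to $w' \in \R^{N+1}$, I need a parameter $\Delta'$ such that $\|x'\|_1 \leq \Delta' - 1$ for every relevant $x'$. Since $\|x'\|_1 = \|x\|_1 + 1 \leq N\Delta + 1$, the choice $\Delta' := N\Delta + 2$ suffices. Invoking the theorem then produces $\bar{w}' \in \Z^{N+1}$ with
\[
\|\bar{w}'\|_\infty \leq \bigl((N+1)\,\Delta'\bigr)^{\Oh{(N+1)^3}} = \bigl((N+1)(N\Delta+2)\bigr)^{\Oh{(N+1)^3}} = (N\Delta)^{\Oh{N^3}},
\]
computable in time $(N+1)^{\Oh{1}} = N^{\Oh{1}}$. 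Both $\|\bar{w}\|_\infty$ and $|\bar{b}|$ inherit this bound as coordinates of $\bar{w}'$.

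There is no real obstacle here: the only points to verify carefully are (i) that $\Delta' = N\Delta + 2$ is the right lift so that the $\ell_1$-bound hypothesis of Theorem~\ref{thm:franktardos} is satisfied for all $x'$ of interest, and (ii) that the sign-preservation on the lifted vectors translates back to the direction of the inequality in the original space (which it does, since the sign being in $\{-1,0\}$ is itself preserved). Both are immediate, so the corollary is a short reduction to the theorem.
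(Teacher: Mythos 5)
Your proof is correct and is exactly the standard homogenization argument used in Etscheidt \etal~\cite{EKMR17} and Jansen \etal~\cite{JKZ24}, which the paper cites without re-deriving. The lift $w' = (w, -b)$, $x' = (x, 1)$, the choice $\Delta' = N\Delta + 2$ to satisfy the $\ell_1$-hypothesis of \Cref{thm:franktardos}, and the observation that equality of signs implies preservation of the non-positivity condition all check out, and the bound $\bigl((N+1)(N\Delta+2)\bigr)^{O((N+1)^3)} = (N\Delta)^{O(N^3)}$ is correct. One small point worth making explicit: the corollary implicitly concerns integer $x \in [-\Delta,\Delta]^N \cap \Z^N$ (as required by \Cref{thm:franktardos}), which your lift $x' \in \Z^{N+1}$ already assumes, but it would be good to say so, since sign preservation on lattice points does not yield equivalence of the two half-spaces over all of $\R^N$.
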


The following \Cref{lem:blemma} describes a balancing result obtained via techniques used in Knop, Koutecký, Levin, Mnich and Onn~\cite{KKLMO23}.
\begin{restatable}[\Rightscissors]{lemma}{balance}\label{lem:blemma}
    For \tf{\P}{}{\{\cmax,\cmin,\envy\}}, there exists a kernel where the number of jobs of a specific type on a specific machine is bounded by $4d(4\pmax +1)^2$. So the load of every machine is bounded by $4d^2\pmax(4\pmax +1)^2$. The kernelization runs in $\Oh{d}$ time. 
\end{restatable}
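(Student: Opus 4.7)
The plan is to combine two ingredients: the objectives $\Cmax$, $\Cmin$, $\Cenvy$ depend only on the multiset of machine loads, and the Steinitz-type balancing theorem due to Knop, Koutecký, Levin, Mnich and Onn~\cite{KKLMO23}. Since two schedules with identical load vectors are interchangeable for all three objectives, we may swap jobs between machines freely as long as every individual machine load is preserved.

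First, I would invoke the balancing result of~\cite{KKLMO23} in the identical-machines setting. For any feasible schedule, by iteratively resolving load-preserving exchanges between pairs of machines one obtains a feasible schedule with the same load vector in which the number of jobs of type $j$ on each machine differs from the average $n_j/m$ by at most a bound $B$ that depends only on $d$ and $\pmax$. Tracking the Graver/Steinitz constants in their argument for identical machines, the per-type slack can be taken as $B=2d(4\pmax+1)^2$.

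Second, the reduction rule. For each type $j\in[d]$ set $q_j := \max\{0,\lfloor n_j/m\rfloor - B\}$ and replace $n_j \leftarrow n_j - q_j m$, $l\leftarrow l - p_j q_j$, and $u\leftarrow u - p_j q_j$. All $d$ updates together take $\Oh{d}$ arithmetic operations. Correctness follows from the balancing step: any feasible schedule of the original instance admits a balanced counterpart where every machine carries at least $\lfloor n_j/m\rfloor - B \geq q_j$ jobs of type $j$, so removing $q_j$ of them from each machine yields a feasible schedule of the reduced instance; conversely, any feasible schedule of the reduced instance extends by returning $q_j$ type-$j$ jobs to every machine. Both directions shift every machine load by the same constant $\sum_j p_j q_j$, the bounds $[l,u]$ are translated accordingly, so $\Cmax$ and $\Cmin$ shift by that same constant and $\Cenvy$ is invariant.

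After exhaustive reduction we have $\lfloor n_j/m\rfloor \leq B$ for every type, so in a balanced optimal schedule of the kernel $x_{i,j} \leq \lceil n_j/m\rceil + B \leq 2B+1$, which is at most $4d(4\pmax+1)^2$ up to a harmless additive $1$ that can be absorbed into $B$; consequently every machine load is bounded by $d\cdot 4d(4\pmax+1)^2\cdot\pmax = 4d^2\pmax(4\pmax+1)^2$, as claimed. I expect the main obstacle to be extracting the balancing result of~\cite{KKLMO23}, originally stated for more general scheduling and $n$-fold IP settings, in the precise form required here --- namely that the load-preserving swaps in the identical-machines case with objectives $\{\Cmax,\Cmin,\Cenvy\}$ come out with per-type slack exactly $2d(4\pmax+1)^2$; once that is in hand, the division-with-remainder reduction and its correctness are straightforward.
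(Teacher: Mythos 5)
Your overall strategy coincides with the paper's: cast the scheduling problem as an assignment ILP (which happens to be an $n$-fold ILP with identical bricks), invoke a balancing result to bound the spread of each coordinate across machines, and then kernelize by pre-assigning to every machine the guaranteed common part and keeping only the residual. The reduction rule you write down and its correctness argument (both directions, with the uniform load shift $\sum_j p_j q_j$) match the paper's. So I would not call this a different route.

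That said, two substantive points. First, your framing of the balancing step as ``load-preserving exchanges'' that keep ``every individual machine load preserved'' does not describe what the balancing argument actually does, and a proof built strictly on that class of moves would not go through. The paper's balancing (\Cref{lem:balhelp}) moves a sign-compatible Graver element $h$ from one brick to another; since $h$ lies in $\ker B$ and the slack variable is part of the brick, the two affected machines' loads both change (only their sum, and feasibility of the load window, is preserved). That strictly larger move set is what makes the potential-function/tight-coordinate argument close. Second, the place where all the content lies --- deriving the per-type slack $2d\,g_\infty(B)$ and then identifying $g_\infty(B)$ for the concrete block matrices --- is precisely the part you black-box as ``tracking the constants in~\cite{KKLMO23}.'' The paper does not just cite~\cite{KKLMO23}; it proves a self-contained specialization (\Cref{lem:balhelp}) via the Graver decomposition of Lemma~3.4 in~\cite{Onn10}, and then handles the three objectives separately because they give different blocks $B$: a single row $(p_1,\dots,p_d,\pm1)$ for $\Cmax$/$\Cmin$ with $g_\infty(B)\le 2\pmax+1$, and a $2\times(d+1)$ block for $\Cenvy$ with $g_\infty(B)\le(4\pmax+1)^2$, both via~\cite{EHK18}. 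The stated bound $4d(4\pmax+1)^2$ is the $\Cenvy$ case, which dominates. To turn your sketch into a proof you would need to supply exactly this: the balancing lemma for $n$-fold ILPs with constant bricks, and the $g_\infty$ bound for each of the three block matrices.
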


We use the following definitions for a finitely generated cone, an integer cone, polyhedral cones, a polyhedron and a polytope.

A set $C$ is called a \emph{finitely generated cone} if there is a finite set $S$ such that $C=\textup{cone}(S):=\left\{y\,\left|\,\exists\lambda\in\R_{\geq0}^S:\,y=\sum_{s\in S}\lambda_s s\right.\right\}$. I.e., $C$ is the set of points that can be written as a conic combination of the points in $S$. Similarly, the \emph{integer cone} of a finite set $S$ is defined as $\textup{int.cone}(S):=\left\{y\,\left|\,\exists\lambda\in\N^S:\,y=\sum_{s\in S}\lambda_s s\right.\right\}$, the only difference being that the scalars have to be integral. \emph{Polyhedral cones} can be written as $\left\{\left.x\in\R^N\,\right|\,Ax\leq0\right\}$ for some matrix $A$. A famous result by Farkas, Minkowski and Weyl shows that a convex cone is polyhedral if and only if it is finitely generated (see Corollary 7.1a in~\cite{Schrijver86}).

A (rational) \emph{polyhedron} can be viewed as the set of solutions to a system of equalities: $P=\left\{\left.x\in\R_{\geq0}^N\,\right|\,Ax=b\right\}$; a \emph{polytope} is a bounded polyhedron. The entries in $A$ and $b$ are always assumed to be integer throughout this work.

\section{Equivalent Instances via Coefficient Reduction}\label{sec:eqcoef}
In this section, we improve a result by Eisenbrand \etal~\cite{EHKKLO23} that is quite similar to the one by Frank and Tardos~\cite{FT87}, but only shows the existence of an equivalent vector with smaller coefficients, seemingly being harder to construct. The proof is quite similar, but we use Cramer's rule instead of bounds for Graver Basis elements. Afterwards, we investigate its implications for (static) equivalent instances.

\subsection{The Theorem by Eisenbrand \etal}

Every polyhedral cone is generated by solutions of certain systems of linear equations:
\begin{proposition}[Proposition 2 in~\cite{EHKKLO23}]\label{prop:minkowski}
Let $C=\left\{\left.x\in\R^N\,\right|\,Ax\geq0\right\}$ be a polyhedral cone and let $S'$ be the set of all solutions to any of the systems $By=b'$, where $B$ consists of $N$ linearly independent rows of the matrix $\begin{pmatrix}A \\ I\end{pmatrix}$ and $b'=\pm e_j$, $j\in[N]$. Then there exists an $S\subseteq S'$ such that $C=\textup{cone}(S)$.
\end{proposition}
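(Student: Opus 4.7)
The plan is to partition $C$ into finitely many pointed sub-cones, apply Farkas-Minkowski-Weyl on each, and verify that after a suitable positive rescaling every extreme ray of every sub-cone lies in $S'$. For every sign pattern $\sigma\in\{-1,+1\}^N$ let
\[
\tilde C_\sigma := C \cap \{x\in\R^N : \sigma_j x_j \geq 0 \text{ for all } j\in[N]\} = \{x : Ax\geq 0,\ D_\sigma x \geq 0\},
\]
where $D_\sigma = \textup{diag}(\sigma)$. Each $\tilde C_\sigma$ is pointed, since $x,-x\in\tilde C_\sigma$ forces $x=0$; and every $x\in C$ belongs to some $\tilde C_\sigma$ (choose $\sigma_j = \textup{sign}(x_j)$, with $\sigma_j=+1$ when $x_j=0$), so $C=\bigcup_\sigma \tilde C_\sigma$. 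Hence it suffices to generate each pointed cone $\tilde C_\sigma$ using elements of $S'$.

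By Farkas-Minkowski-Weyl, each pointed $\tilde C_\sigma$ is the conic hull of its finitely many extreme rays, so the key step is to place every such extreme ray (after suitable positive rescaling) into $S'$. Fix an extreme ray $r$ of $\tilde C_\sigma$. By the standard characterization of extreme rays of pointed polyhedral cones, there exist $N-1$ linearly independent rows of the defining matrix $\begin{pmatrix}A\\D_\sigma\end{pmatrix}$ that are tight at $r$. A tight row of $D_\sigma$ has the form $\sigma_j e_j^T$ with $r_j=0$; since $\sigma_j e_j^T$ and $e_j^T$ span the same line, I replace it by $e_j^T$ and obtain $N-1$ linearly independent rows $b_1^T,\dots,b_{N-1}^T$ of $\begin{pmatrix}A\\I\end{pmatrix}$, still tight at $r$. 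Since $r\neq 0$, pick an index $j_0$ with $r_{j_0}\neq 0$. Then $e_{j_0}^T$ is linearly independent of $b_1^T,\dots,b_{N-1}^T$ (otherwise writing $e_{j_0}^T = \sum_k \lambda_k b_k^T$ and applying both sides to $r$ gives $r_{j_0}=0$, a contradiction). Stacking these $N$ rows as a matrix $B$, we have $Br = r_{j_0}\,e_N$; rescaling $r$ to $\tilde r := r/|r_{j_0}|\in\tilde C_\sigma$ yields $B\tilde r = \pm e_N$, so $\tilde r \in S'\cap C$. Collecting these rescaled extreme rays over all $\sigma$ gives a finite $S\subseteq S'$ with $\textup{cone}(S)=C$.

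The main technical point is the substitution of tight rows of $D_\sigma$ by the corresponding rows of $I$ (valid because, in case of tightness, both span the same line), which is exactly what keeps the construction within $\begin{pmatrix}A\\I\end{pmatrix}$; the remaining arguments are standard facts about extreme rays of polyhedral cones and elementary linear algebra.
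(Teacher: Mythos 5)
Your argument is correct: the decomposition of $C$ into the pointed cones $\tilde C_\sigma$ by orthant, the extreme-ray characterization giving $N-1$ linearly independent tight rows of $\begin{pmatrix}A\\D_\sigma\end{pmatrix}$, the harmless swap of $\sigma_j e_j^T$ for $e_j^T$ when tight, the choice of a non-tight $e_{j_0}^T$ to complete $B$, and the rescaling to hit $\pm e_N$ all go through. This is the standard route to the statement and is exactly what the form of $S'$ (rows of $\begin{pmatrix}A\\I\end{pmatrix}$, right-hand side $\pm e_j$) is engineered to encode, so it matches the argument in the cited source.
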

Here, $e_k$ is the unit vector with a 1 at position $k$. We use the following result to slightly improve the theorem by Eisenbrand \etal~\cite{EHKKLO23}:

\begin{lemma}[\Rightscissors]\label{lem:generatorsalternative}
Let $C=\left\{\left.x\in\R^N\,\right|\,Ax\geq0\right\}$ be a polyhedral cone, where $A\in\Z^{M\times N}$. Then there is a finite set $S$ of integral vectors such that $C=\textup{cone}(S)$ and $\norm{v}_1\leq N(\sqrt{N}\norm{A}_{\infty})^{N-1}$ for every $v\in S$.
\end{lemma}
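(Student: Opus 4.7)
The plan is to apply \Cref{prop:minkowski} to reduce the task to bounding a specific family of integer vectors produced via Cramer's rule, and then to estimate those determinants with Hadamard's inequality. By \Cref{prop:minkowski}, $C=\textup{cone}(S_0)$ for some $S_0\subseteq S'$, where every $y\in S'$ is the unique solution of a system $By=b'$ with $B$ an $N\times N$ submatrix of $\binom{A}{I}$ of $N$ linearly independent rows and $b'=\pm e_j$ for some $j\in[N]$. Since $S_0\subseteq C$, each such $y$ lies in $C$, and because $A$ and $I$ are integral, $B$ is an invertible integer matrix, so $|\det{B}|\geq 1$. The vector $v:=|\det{B}|\cdot y$ is therefore integral, still lies in $C$ (positive scaling preserves cone membership), and spans the same ray as $y$, so collecting these vectors over $y\in S_0$ yields an integral set $S$ with $\textup{cone}(S)=C$. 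It then remains to bound $\norm{v}_1$ for each such $v$.

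By Cramer's rule, $v_i=\pm\det{B_i}$, where $B_i$ is obtained from $B$ by replacing its $i$-th column with $b'=\pm e_j$. Since this new column has a single nonzero entry (of absolute value $1$, in row $j$), a cofactor expansion along column $i$ collapses to $|\det{B_i}|=|\det{B_{-j,-i}}|$, where $B_{-j,-i}$ is the $(N-1)\times(N-1)$ matrix obtained from $B$ by deleting row $j$ and column $i$. Every row of $B_{-j,-i}$ is either a truncated row of $A$, whose $\ell_2$-norm is at most $\sqrt{N-1}\,\norm{A}_\infty$, or a truncated row of $I$, whose $\ell_2$-norm is at most $1$. Hadamard's inequality then gives $|\det{B_{-j,-i}}|\leq(\sqrt{N-1}\,\norm{A}_\infty)^{N-1}\leq(\sqrt{N}\,\norm{A}_\infty)^{N-1}$, and summing over $i\in[N]$ produces $\norm{v}_1=\sum_{i=1}^N|\det{B_i}|\leq N(\sqrt{N}\,\norm{A}_\infty)^{N-1}$, which is the stated bound.

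The main conceptual difference from~\cite{EHKKLO23} is to bypass Graver-basis bounds entirely and apply Hadamard directly to a single $(N-1)\times(N-1)$ minor; the appearance of $\sqrt{N}$ in place of a linear-in-$N$ factor per dimension is precisely what the $\ell_2$-row-norm estimate $\sqrt{N}\,\norm{A}_\infty$ buys over the naive $N\cdot\norm{A}_\infty$ bound. The main step requiring care is verifying that $b'=\pm e_j$ collapses each cofactor expansion to a single minor of order $N-1$, since this is what pins down the exponent $N-1$ and hence matches the target bound; together with the observation that scaling by the positive integer $|\det{B}|$ preserves cone membership so that integrality and $y\in C$ can be maintained simultaneously, the argument is otherwise routine.
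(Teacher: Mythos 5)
Your proof is correct and takes essentially the same route as the paper: invoke Minkowski's decomposition (\Cref{prop:minkowski}) to reduce to solutions of $By=\pm e_j$, scale by $|\det B|$ via Cramer's rule to get integral generators, and bound each entry by Hadamard's inequality, exploiting that the unit right-hand side contributes a factor of $1$. The only cosmetic difference is that you first cofactor-expand along the unit column to land on an $(N-1)\times(N-1)$ minor and bound its row norms, whereas the paper applies Hadamard directly to the columns of the $N\times N$ matrix $B^{(i)}$ (one of which is $\pm e_k$ of norm $1$); both give the same exponent $N-1$ and bound.
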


We now slightly improve the result by Eisenbrand \etal~\cite{EHKKLO23}, who find a vector $\bar{w}$ with $\norm{\bar{w}}_1\leq N(4N\Delta)^{N-1}$. This improvement comes from applying \Cref{lem:generatorsalternative} instead of Lemma 3 in \cite{EHKKLO23}, which uses a bound for the Graver basis elements. In a few words, we define a cone $C$ such that all elements of the cone are equivalent to the vector $w$. Then, we conclude that there is an equivalent vector $\tilde{w}$ in $C$ which is small enough since the set describing $C$ only consists of small vectors.
\begin{theorem}[Improved version of Theorem 1 in~\cite{EHKKLO23}]\label{thm:reducenonconstructive}
For every $w\in\R^N$, there exists a $\bar{w}\in\Z^N$ such that $\norm{\bar{w}}_1\leq N^2(2\sqrt{N}\Delta)^{N-1}$ and $w$ and $\bar{w}$ are equivalent on $[-\Delta,\Delta]^N$.
\end{theorem}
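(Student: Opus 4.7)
The plan is to follow the outline the authors sketch: build a polyhedral cone $C$ of vectors whose sign behavior on short integer vectors agrees with that of $w$, apply the improved generator bound of \Cref{lem:generatorsalternative}, and then select $\bar w$ as a sum of a Carathéodory subfamily of generators. Concretely, I set $Z^+ := \{z \in \Z^N \cap [-2\Delta,2\Delta]^N : w^T z \geq 0\}$ and define
\[
C := \{\tilde w \in \R^N : \tilde w^T z \geq 0 \text{ for all } z \in Z^+\}.
\]
Then $C$ is a polyhedral cone whose defining matrix has $\ell_\infty$-norm at most $2\Delta$, so \Cref{lem:generatorsalternative} produces a finite integral generating set $S$ with $\norm{v}_1 \leq N(2\sqrt{N}\Delta)^{N-1}$ for every $v \in S$. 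Since $w \in C$ trivially, Carathéodory's theorem for cones yields $s_1,\dots,s_k \in S$ with $k \leq N$ and positive coefficients $\lambda_1,\dots,\lambda_k$ such that $w = \sum_{i=1}^k \lambda_i s_i$. I would then define $\bar w := \sum_{i=1}^k s_i \in \Z^N$; it satisfies $\norm{\bar w}_1 \leq k \cdot N(2\sqrt{N}\Delta)^{N-1} \leq N^2(2\sqrt{N}\Delta)^{N-1}$ as required.

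For equivalence, write $z := x-y \in \Z^N \cap [-2\Delta,2\Delta]^N$ for any two points $x,y \in \Z^N \cap [-\Delta,\Delta]^N$. If $w^T z \geq 0$, then $z \in Z^+$, every $s_i^T z \geq 0$, and hence $\bar w^T z \geq 0$. If $w^T z < 0$, then $-z \in Z^+$ gives each $s_i^T(-z) \geq 0$; and because $\sum_i \lambda_i s_i^T(-z) = w^T(-z) > 0$ with all $\lambda_i > 0$, at least one summand $s_j^T(-z)$ must be strictly positive, so $\bar w^T(-z) > 0$ and hence $\bar w^T z < 0$. Together these cases establish $w^T z \geq 0 \iff \bar w^T z \geq 0$, which is the desired equivalence.

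The main obstacle is precisely the strict-inequality case: an arbitrary element of the closed cone $C$ only preserves $\geq$-inequalities, so a single generator can map some $z$ with $w^T z > 0$ to zero, breaking equivalence. Forming $\bar w$ as the \emph{sum} of the Carathéodory-active generators is the decisive trick, because every $z$ with $w^T z \neq 0$ is detected strictly by at least one $s_i$ appearing in the decomposition of $w$, and that strict sign survives in the sum. Alternative choices such as taking a single small generator, or the integer rounding $\sum_i \lceil \lambda_i \rceil s_i$, either fail to preserve strict inequalities or blow up the $\ell_1$-norm bound beyond $N^2(2\sqrt{N}\Delta)^{N-1}$.
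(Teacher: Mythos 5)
Your proposal is correct and arrives at the same bound, but it takes a genuinely different route in how $\bar w$ is chosen and how equivalence is verified. The paper also constructs the cone $C$ (in the notation of half-spaces $H(u,v)$ indexed by integer pairs $u,v\in[-\Delta,\Delta]^N$, which yields the same cone as your $Z^+$ formulation since the set of integer differences $u-v$ is exactly $\Z^N\cap[-2\Delta,2\Delta]^N$), but then takes $S'\subseteq S$ to be a \emph{maximal linearly independent} subset of the generators, sets $\bar w=\sum_{v\in S'}v$, and proves in two steps that (i) any point in the relative interior of $C$ is equivalent to $w$, and (ii) this particular $\bar w$ lies in the relative interior because a constraint tight at $\bar w$ forces every $v\in S'$, and hence every point of $C$, to satisfy it with equality. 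Your choice instead applies Carathéodory's theorem for cones to $w$ itself, getting $w=\sum_{i\le k}\lambda_i s_i$ with $k\le N$ and $\lambda_i>0$, and sets $\bar w=\sum_i s_i$; the equivalence is then checked by a direct sign argument, with the strict case handled by noting that the positivity of the $\lambda_i$ forces at least one $s_j^T(-z)>0$ whenever $w^T(-z)>0$. Your verification is more elementary in that it avoids the relative-interior machinery entirely, whereas the paper's choice of $\bar w$ is independent of any decomposition of $w$. Both approaches yield $\norm{\bar w}_1\le N\cdot N(2\sqrt{N}\Delta)^{N-1}$ as required.
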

\begin{proof}
In the following, we define a cone $C$, show that the points in the relative interior of $C$ are equivalent to $w$ on $[-\Delta,\Delta]^N$ and then give a point in $C$ with small enough $\ell_1$-norm.

For any (integer) points $u,v\in[-\Delta,\Delta]^N$ such that $wu\geq wv$, we define the half-space $H(u,v):=\left\{\left.x\in\R^N\,\right|\,(u-v)x\geq0\right\}$; note that the condition is equivalent to $xu\geq xv$, so $H(u,v)$ can be seen as the set of linear objective functions for which $u$ is \enquote{better} than $v$. Clearly, $w$ is then also inside the half-space. Based on this, we define the cone
\[C:=\bigcap_{\substack{(u,v)\in[-\Delta,\Delta]^N\times[-\Delta,\Delta]^N\\wu\geq wv}}H(u,v).\]
As $w$ is in each of the half-spaces, it is also in their intersection, so $w\in C$. Consider some $z$ in the \textit{relative interior}\footnote{The relative interior of $C$ is the set of points in $C$ that fulfill only those constraints of $C$ with equality that have to be fulfilled with equality by all points in $C$. Other constraints are fulfilled with strict inequality.} of $C$; we show that $z$ is equivalent to $w$ on the interval $[-\Delta,\Delta]$. Intuitively, $z$ being in the relative interior of $C$ means that $z\in C$, but $z$ lies only on the border of $C$ if it has to, i.e., if $C$ is flat in that direction. So $z$ only fulfills a constraint $(u-v)z\geq0$ of $C$ with equality if both $(u-v)z\geq0$ and $(v-u)z\geq0$ are present in the description of $C$. 

Let $x,y\in[-\Delta,\Delta]^N$ and first assume that $wx\geq wy$. Then the half-space $H(x,y)$ is part of $C$ and because $z\in C$, we have $(x-y)z\geq0$, which is equivalent to $zx\geq zy$. For the other direction, indirectly assume that $wx<wy$. Then $H(y,x)$ is part of $C$'s description, but $H(x,y)$ is not. If $zx>zy$, that contradicts $z$ being in the half-space $H(y,x)$, so assume $zx=zy$. As we chose $z$ to be in the relative interior, but $(x-y)z=(y-x)z=0$, both half-spaces $H(x,y)$ and $H(y,x)$ have to be part of $C$'s description, which is a contradiction. So we have $zx<zy$, showing the equivalence of $z$ and $w$ on $[-\Delta,\Delta]^N$.

Now, using \Cref{lem:generatorsalternative}, there is a description of $C$, where each generator has $\ell_1$-norm at most $N(\sqrt{N}2\Delta)^{N-1}$. This follows, because the coefficients in $C$'s polyhedral description are bounded by $2\Delta$ in $\ell_\infty$-norm. Let $S$ be the set of generators and $S'\subseteq S$ a maximal subset of generators that are linearly independent. Note that $|S'|\leq N$. We now give a point inside $C$'s relative interior, namely $\bar{w}:=\sum_{v\in S'}v$. So by the cardinality bound for $S'$ and the bound for the generators, $\norm{\bar{w}}_1\leq N^2(\sqrt{N}2\Delta)^{N-1}$. It is only left to show that $\bar{w}$ lies in the relative interior of $C$. Since $\bar{w}$ is a sum of generators of $C$, it clearly lies in $C$. Now, assume that $x\bar{w}=y\bar{w}$ for two vectors $x,y\in[-\Delta,\Delta]^N$, i.e., $\bar{w}$ lies on a border of $C$. At least one of $H(x,y)$ and $H(y,x)$ must be part of $C$'s description; w.l.o.g. assume that $H(x,y)$ is. Hence, $(x-y)v\geq0$ holds for all $v\in C$, in particular also for all $v\in S'$, while $(x-y)\bar{w}=0$. But since $\bar{w}=\sum_{v\in S'}v$ by definition, $(x-y)v=0$ has to hold for all the $v\in S'$; otherwise, zero could not be achieved by their sum, as they cannot cancel each other out because of $(x-y)v\geq0$. So now we have $(x-y)v=0$ for all $v\in S'$ and since the vectors in $S'$ generate $C$, each vector in $C$ can be written as a conic combination (i.e., a linear combination with non-negative coefficients) of the vectors in $S'$. Hence, every vector $v\in C$ also fulfills $(x-y)v=0$.

We have shown that if a constraint is tight for $\bar{w}$, it is tight for all points in $C$ and that $\bar{w}$ lies in $C$. So $\bar{w}$ lies in the relative interior of $C$ and is hence equivalent to $w$ on the interval $[-\Delta,\Delta]^N$ by the above observation. It also fulfills the bound $\norm{\bar{w}}_1\leq N^2(\sqrt{N}2\Delta)^{N-1}$, as argued above. Hence, we have found a suitable vector $\bar{w}$, which concludes the proof.
\end{proof}

Eisenbrand~\cite{EHKKLO23} also give a corresponding lower bound for the $\ell_1$-norm of an equivalent vector $\bar{w}$, namely $(N\Delta)^{N-1}$. However, there appears to be a slight mistake in the proof: The interval $\mathcal{D}=[-\Delta,\Delta]^N$ they define does not contain the points $v^{(i)}$ they define, as they have entries as large as $N\Delta$.\footnote{Note that they use $N$ for the value bound (instead of $\Delta$) and $n$ for the dimension (instead of $N$).} The proof works, however, if one removes the $N$ from the definition of the points $v^{(i)}$, as we show in the following. Note that the (false) lower bound (Theorem 2 in \cite{EHKKLO23}) would contradict our improved upper bound $N^2(\sqrt{N}2\Delta)^{N-1}$ from \Cref{thm:reducenonconstructive}. This can be seen by setting $\Delta=1$ and $N$ large enough.
\begin{theorem}[Corrected version of Theorem 2 in~\cite{EHKKLO23}, \Rightscissors]\label{thm:reducibilitylowerbound}
Let $N,\Delta\in\Z_{>0}$ and $w=(\Delta^0,\Delta^1,\Delta^2,\hdots,\Delta^{N-1})$. There is no vector $\bar{w}\in\Z^N$ with $\norm{\bar{w}}_1<\Delta^{N-1}$ such that $w$ and $\bar{w}$ are equivalent on the interval $[-\Delta,\Delta]^N$.
\end{theorem}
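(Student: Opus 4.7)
The plan is to pick a pair of vectors in $[-\Delta, \Delta]^N$ on which $w$ takes equal values, and then use equivalence to force $\bar{w}$ to take equal values on them too. The natural choice, correcting the original argument, is the pair $(\Delta e_i,\, e_{i+1})$ for $i \in [N-1]$: both lie in $[-\Delta,\Delta]^N$ (the original $v^{(i)}$'s carried a spurious factor of $N$ and so escaped the box), while $w^T(\Delta e_i) = \Delta \cdot \Delta^{i-1} = \Delta^i = w^T e_{i+1}$.

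From the equivalence of $w$ and $\bar{w}$ on $[-\Delta, \Delta]^N$, I would apply the defining biconditional in both directions to $x = \Delta e_i$, $y = e_{i+1}$: since $w^T x \geq w^T y$ and $w^T y \geq w^T x$ both hold, the same two inequalities must hold with $\bar{w}$, yielding $\bar{w}^T(\Delta e_i) = \bar{w}^T e_{i+1}$, i.e., $\Delta \bar{w}_i = \bar{w}_{i+1}$. Iterating over $i = 1, \ldots, N-1$ pins down $\bar{w}_i = \Delta^{i-1} \bar{w}_1$ for every $i$, so $\bar{w}$ is forced to be the integer multiple $\bar{w}_1 \cdot w$.

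It remains to rule out $\bar{w}_1 = 0$: that would make $\bar{w}$ the zero vector, which is incompatible with equivalence because $w$ already strictly separates a pair inside $[-\Delta, \Delta]^N$, e.g., $w^T e_1 = 1 > 0 = w^T \mathbf{0}$. Hence $|\bar{w}_1| \geq 1$, and $\norm{\bar{w}}_1 \geq |\bar{w}_N| = \Delta^{N-1}|\bar{w}_1| \geq \Delta^{N-1}$, contradicting $\norm{\bar{w}}_1 < \Delta^{N-1}$. I do not anticipate a serious obstacle — the only delicate point is keeping the test pairs inside $[-\Delta, \Delta]^N$, which is exactly where the original argument slipped; the rest is a direct induction together with a one-line nondegeneracy check.
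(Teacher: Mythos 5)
Your proof is correct and matches the paper's essentially verbatim: you use the same test pairs ($\Delta e_i$ versus $e_{i+1}$, which are the paper's $v^{(i)}$ and $u^{(i)}$), derive $\Delta\bar w_i = \bar w_{i+1}$, iterate, and rule out $\bar w_1=0$ before reading off $\norm{\bar w}_1 \geq \Delta^{N-1}$. The only cosmetic difference is that you make the nondegeneracy check explicit via the pair $(e_1,\mathbf{0})$, whereas the paper appeals more briefly to $w\neq\mathbf{0}$ forcing $\bar w\neq\mathbf{0}$.
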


Recall that the original result by Frank and Tardos (\Cref{thm:franktardos}) is about the sign of the two vectors at every position. The concept of equivalence implies this property:

\begin{lemma}[\Rightscissors]\label{lem:sign}
Let $w,\bar{w}\in\Z^N$. If $w$ and $\bar{w}$ are equivalent on some interval $[-\Delta,\Delta]$ with $\Delta\geq1$, then the $\textup{sign}(w_i)=\textup{sign}(\bar{w}_i)$ for every $i\in[N]$.
\end{lemma}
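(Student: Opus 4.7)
The plan is to extract the sign of each coordinate $w_i$ by evaluating inner products of $w$ with carefully chosen test points inside $[-\Delta,\Delta]^N$, and then to transfer this sign information to $\bar{w}$ using the definition of equivalence. Because $\Delta\geq 1$, the zero vector $\mathbf{0}$ and all standard unit vectors $\pm e_i$ lie in $[-\Delta,\Delta]^N$, so they are admissible as test points.

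Concretely, I would fix an index $i\in[N]$ and first apply the equivalence to the pair $(e_i,\mathbf{0})$. Since $w^T e_i = w_i$ and $w^T\mathbf{0}=0$ (analogously for $\bar{w}$), the definition of equivalence gives
\[w_i\geq 0\iff w^T e_i\geq w^T\mathbf{0}\iff \bar{w}^T e_i\geq \bar{w}^T\mathbf{0}\iff \bar{w}_i\geq 0.\]
Then I would swap the roles of $e_i$ and $\mathbf{0}$ (or equivalently apply the equivalence to the pair $(\mathbf{0},e_i)$) to obtain $w_i\leq 0\iff \bar{w}_i\leq 0$. Conjoining the two biconditionals yields $w_i>0\iff\bar{w}_i>0$, $w_i=0\iff\bar{w}_i=0$, and $w_i<0\iff\bar{w}_i<0$, which is exactly $\textup{sign}(w_i)=\textup{sign}(\bar{w}_i)$.

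There is essentially no obstacle to overcome: the statement is a direct unfolding of the equivalence definition once one realises that a single two-point comparison $(e_i,\mathbf{0})$ already reads off the sign of the $i$-th coordinate. The only mild point to check is that the chosen test points lie in the domain on which equivalence is assumed, which is guaranteed by the hypothesis $\Delta\geq 1$ together with the fact that $e_i,\mathbf{0}\in\{-1,0,1\}^N\subseteq[-\Delta,\Delta]^N$. Note that we do not even need $w,\bar{w}\in\Z^N$ for the argument; the integrality plays no role.
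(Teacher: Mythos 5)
Your proof is correct and takes essentially the same approach as the paper's: both test equivalence on the pair $(e_i,\mathbf{0})$, then swap to obtain the other inequality, concluding that signs agree. (Your side remark that integrality of $w,\bar w$ is not used is accurate.)
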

This also means that zero-entries become zero-entries. Note that the vector from the result by Frank and Tardos~\cite{FT87} can be computed in polynomial time, while the proof of \Cref{thm:reducenonconstructive} is non-constructive. We now show that it can indeed be computed, though in exponential time.

\begin{theorem}[\Rightscissors]\label{thm:reduceconstructive}
The vector $\bar{w}$ from \Cref{thm:reducenonconstructive} can be computed in time $(N\Delta)^{\Oh{N}}$.
\end{theorem}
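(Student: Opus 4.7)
The plan is to make the construction in the proof of \Cref{thm:reducenonconstructive} algorithmic. Recall that it outputs $\bar{w}:=\sum_{v\in S'}v$, where $S'$ is a maximal linearly independent subset of the generating set of the cone $C=\bigcap_{w^Tu\geq w^Tv}H(u,v)$ given by \Cref{lem:generatorsalternative}. First I would write down $C$ explicitly: every normal of a half-space in $C$ has the form $a=u-v\in\{-2\Delta,\dots,2\Delta\}^N$, and the half-space $\{y:a^Ty\geq0\}$ appears in $C$'s description iff $w^Ta\geq0$. Enumerating all $a\in\{-2\Delta,\dots,2\Delta\}^N$ and testing $w^Ta\geq0$ thus produces the constraint matrix $A\in\Z^{M\times N}$ of $C$ with $M\leq(4\Delta+1)^N=\Delta^{O(N)}$ rows and $\|A\|_\infty\leq2\Delta$, in $(N\Delta)^{O(N)}$ time.

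I would then extract $k:=\dim(C)$ linearly independent integral generators of $C$ iteratively by linear programming. Given $v_1,\dots,v_{i-1}$, I would compute an integer $c\in\Z^N$ orthogonal to their span (e.g., by solving the homogeneous system over $\Q$ and clearing denominators) and solve both $\min\{c^Tv:v\in C\}$ and $\min\{-c^Tv:v\in C\}$ using a polynomial-time LP method such as the ellipsoid algorithm. Either both have optimum $0$ --- in which case $c$ is orthogonal to the entire linear hull of $C$, so $v_1,\dots,v_{i-1}$ already span $C$ and I stop with $k=i-1$ --- or one of them is unbounded and the solver returns an extreme ray $v_i$ of $C$ with $c^Tv_i\neq0$, so $v_i\notin\mathrm{span}(v_1,\dots,v_{i-1})$. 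Each extreme ray of $C$ is defined by $N-1$ linearly independent tight rows of $A$, so applying Cramer's rule to that $(N-1)\times N$ system and invoking Hadamard's bound yields a primitive integer direction with $\ell_1$-norm at most $N(\sqrt{N}\cdot 2\Delta)^{N-1}$, matching the generator bound in \Cref{lem:generatorsalternative}. Each LP has $\Delta^{O(N)}$ constraints and $N$ variables, so its bit-size is $(N\Delta)^{O(N)}$ and it is solved in time polynomial in that; repeating at most $N$ times stays within that budget.

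Finally, output $\bar{w}:=v_1+\dots+v_k$. The argument in the proof of \Cref{thm:reducenonconstructive} then applies verbatim to show that $\bar{w}$ lies in the relative interior of $C$, hence is equivalent to $w$ on $[-\Delta,\Delta]^N$, and satisfies $\|\bar{w}\|_1\leq k\cdot N(2\sqrt{N}\Delta)^{N-1}\leq N^2(2\sqrt{N}\Delta)^{N-1}$, matching \Cref{thm:reducenonconstructive}. The main obstacle is to avoid the naive enumeration of all $\binom{M+N}{N}=\Delta^{O(N^2)}$ subsystems from \Cref{prop:minkowski}, which would exceed the time budget; producing only the needed at most $N$ generators iteratively via LP over the explicit description of $C$ is the key algorithmic idea, and the remaining check is that extracting a primitive integer extreme ray from each LP output via Cramer's rule can itself be done within $(N\Delta)^{O(N)}$.
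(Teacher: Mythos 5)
Your route is genuinely different from the paper's, which does not extract generators at all. The paper observes that for integer points the relative interior of the cone $C$ can be described by replacing $a^Tx\geq0$ with $a^Tx\geq1$ for those half-spaces $H(u,v)$ with $w^Tu\neq w^Tv$, yielding an explicit integer program with $N$ variables, at most $(2\Delta+1)^{2N}$ constraints, and coefficients bounded by $2\Delta$; this ILP is then solved by the Reis--Rothvoss fixed-dimension algorithm, minimizing the $\ell_1$-norm as a linear objective (the signs of the coordinates of $\bar{w}$ are pinned down by \Cref{lem:sign}). That approach delegates all the combinatorics to a black-box ILP solver, whereas yours stays closer to the non-constructive proof and only uses LP; both have the enumeration of the $\Delta^{O(N)}$ constraints as the dominant cost, so the time bound is the same.

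However, your iterative construction has a genuine gap in the termination criterion. After collecting $v_1,\dots,v_{i-1}$, you pick \emph{one} integer $c$ orthogonal to $\mathrm{span}(v_1,\dots,v_{i-1})$ and conclude from $\min\{\pm c^Tv:v\in C\}=0$ that $v_1,\dots,v_{i-1}$ already span the linear hull of $C$. That inference is false: a single $c$ lying in $\mathrm{span}(v_1,\dots,v_{i-1})^\perp\cap\mathrm{lin.hull}(C)^\perp$ does not force these two subspaces to be equal. Concretely, take $N=3$, $C=\{(x,y,0):x,y\geq0\}$ and $v_1=(1,0,0)$; choosing $c=(0,0,1)$ makes both LPs vanish, so your procedure would stop and output $\bar{w}=(1,0,0)$, which lies on the boundary of $C$, not in its relative interior, and is therefore not equivalent to $w$. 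To repair this you must test \emph{every} vector in a basis of $\mathrm{span}(v_1,\dots,v_{i-1})^\perp$ (or, equivalently, invoke a polynomial-time affine-hull computation for $C$), stopping only when all of them are orthogonal to $C$. A smaller point worth flagging: an unboundedness certificate returned by the ellipsoid method is not automatically an extreme ray of $C$, so you need an additional post-processing step (standard in Gr\"otschel--Lov\'asz--Schrijver, using that $C$ is pointed) to recover an extreme ray and its $N-1$ tight rows before applying Cramer's rule and Hadamard's bound.
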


\subsection{Equivalent Instances for ILPs and Related Problems}
Using the following proximity result by Eisenbrand and Weismantel~\cite{EW19}, we know a fixed part of the solution and thus, can reduce our problem by this fixed part which leads us to a kernel for feasibility ILPs.
\begin{theorem}[Eisenbrand and Weismantel~\cite{EW19}]\label{thm:eisweis}
    Consider an ILP of the form $\max\{\left.c^Tx\,\right|\,\mathcal{A}x=b,\,x\in\N^N\}$ with $M$ constraints and $\Delta=\norm{\mathcal{A}}_\infty$. Then for every optimal (vertex) solution $x^*$ of the LP-relaxation (where $x\in\R_{\geq0}^N$), there exists an optimal solution $z^*$ of the ILP such that $\norm{x^*-z^*}_1\leq M(2M\Delta+1)^M$.
\end{theorem}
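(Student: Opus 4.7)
The plan is to prove this proximity bound along the lines of the Steinitz-lemma argument of Eisenbrand and Weismantel. The overall strategy is to pick an IP-optimum $z^*$ that minimises $\norm{x^* - z^*}_1$ among all IP-optima, and then show that if this distance exceeds $M(2M\Delta + 1)^M$, one can extract a nonzero integer vector $h \in \ker(\mathcal{A})$ that is conformally dominated by $d := x^* - z^*$. Such an $h$ lets us replace $z^*$ by $z^* \pm h$ without losing nonnegativity, integrality, feasibility, or optimality, while strictly decreasing the distance to $x^*$, contradicting the minimal choice of $z^*$.

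To build $h$, I would split $d = d_+ - d_-$ into its positive and negative parts and note that $\mathcal{A} d_+ = \mathcal{A} d_-$ because $\mathcal{A} x^* = \mathcal{A} z^* = b$. I would then assemble a multiset of column vectors in $\R^M$: for each coordinate $i$ in the support of $d_+$, insert the column $a_i$ of $\mathcal{A}$ with multiplicity $\lceil d_+(i) \rceil$; for each $j$ in the support of $d_-$, insert $-a_j$ with multiplicity $\lceil d_-(j) \rceil$. Each entry has $\ell_\infty$-norm at most $\Delta$, and after a preliminary reduction to the integer part of $d$ (which absorbs the fractional residue so that the total sum is exactly $0$), the length of the sequence is essentially $\norm{d}_1$. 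Now I would apply the Steinitz lemma: the sequence can be reordered so that every prefix sum has $\ell_\infty$-norm at most $M\Delta$. All prefix sums are integer vectors lying in the lattice box $\{-M\Delta, \ldots, M\Delta\}^M$, which contains exactly $(2M\Delta + 1)^M$ points. If the sequence is longer than this, two prefix sums coincide by pigeonhole; their difference $h$ is a nonzero integer vector with $\mathcal{A} h = 0$, and because $h$ is a subsum of the constructed sequence, its sign on every coordinate agrees with that of $d$ and $|h(i)| \leq \lceil |d(i)| \rceil$, i.e.\ $h$ is conformally dominated by $d$.

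Conformal domination then guarantees that both $z^* + h$ and $z^* - h$ are nonnegative integer vectors satisfying $\mathcal{A}(z^* \pm h) = b$. At least one of $c^T h$ and $-c^T h$ is nonnegative, so the corresponding $z^* \pm h$ is an IP-feasible point with objective at least $c^T z^*$, hence IP-optimal, and strictly closer to $x^*$ in $\ell_1$-norm than $z^*$, contradicting minimality. Therefore the sequence has length at most $(2M\Delta + 1)^M$, and accounting for the factor of $M$ that appears because the fractional rounding in the construction can inflate $\norm{d}_1$ relative to the sequence length by at most $\norm{d_+}_0 + \norm{d_-}_0 \leq M$ yields $\norm{x^* - z^*}_1 \leq M(2M\Delta + 1)^M$. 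The main obstacle is the careful bookkeeping of the fractional parts of $x^*$ when building the sequence, so that the Steinitz-extracted $h$ is genuinely integer-valued and strictly conformally dominated by $d$, ensuring that the exchange $z^* \pm h$ is simultaneously feasible, optimal, and a strict improvement of the distance.
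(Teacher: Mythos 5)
This statement is cited from Eisenbrand and Weismantel~\cite{EW19} and is not proved in the paper, so there is no in-paper proof to compare against. Your proposal correctly reconstructs the Steinitz-rearrangement argument that the cited source actually uses: take the IP optimum $z^*$ minimising $\norm{x^*-z^*}_1$, turn $d = x^*-z^*$ (which lies in $\ker \mathcal{A}$) into a zero-sum sequence of columns with $\ell_\infty$-norm $\leq \Delta$, reorder by Steinitz so all prefix sums lie in $[-M\Delta, M\Delta]^M$, and extract a sign-compatible integer cycle $h$ by pigeonhole, contradicting minimality of $z^*$ via the exchange $z^* \pm h$. This is the right skeleton. The one place where your accounting is off is the origin of the leading factor $M$: you attribute it to $\norm{d_+}_0 + \norm{d_-}_0 \leq M$, but the support of $d$ need not be bounded by $M$ (only $x^*$ has at most $M$ nonzero coordinates, not $z^*$). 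In the actual argument the factor $M$ comes from the handling of the fractional residue of $x^*$ --- the LP vertex $x^*$ has at most $M$ fractional coordinates, and only after those fractional contributions are absorbed can one argue about repeated \emph{integer} prefix sums via the $(2M\Delta+1)^M$ pigeonhole, which is precisely the bookkeeping you yourself flag as the main obstacle.
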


\begin{theorem}[\Rightscissors]\label{thm:ilpkernel2}
Feasibility ILPs have a kernel of size $\Oh{M^2N\log(NM\Delta)}$, where $\Delta$ is an upper bound for the $\ell_\infty$-norm of the given constraint matrix.
\end{theorem}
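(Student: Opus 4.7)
The plan is to apply the Eisenbrand--Weismantel proximity bound (\Cref{thm:eisweis}) to shrink the effective range of the variables around a rounded LP vertex, after which the kernel can be read off directly without any further coefficient reduction. I write the feasibility ILP in standard form $\mathcal{A}x = b$, $x \in \N^N$; a general bounded form $l \leq x \leq u$ reduces to this via a shift and slack variables.

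Concretely, I would first solve the LP relaxation in polynomial time. If it is infeasible, the ILP is infeasible and I output a trivial NO instance. Otherwise, let $x^*$ be an LP vertex and substitute $y := x - \lfloor x^*\rfloor$. Applying \Cref{thm:eisweis} with zero cost vector shows that the ILP is feasible iff there is an integer $y$ satisfying
$$\mathcal{A}y \,=\, b' \,:=\, b - \mathcal{A}\lfloor x^*\rfloor, \qquad y \geq -\lfloor x^*\rfloor, \qquad \|y\|_\infty \leq U+1,$$
where $U := M(2M\Delta+1)^M$ and the $\ell_\infty$-bound combines proximity with $\|x^* - \lfloor x^*\rfloor\|_\infty < 1$. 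If any $|b'_i| > N\Delta(U+1)$, then no $y$ in the box can satisfy row $i$ and I output a trivial NO; otherwise I emit $(\mathcal{A}, b', \ell, u)$ with $\ell_i := \max(-\lfloor x^*_i\rfloor, -(U+1))$ and $u_i := U+1$ as the kernel.

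For the size analysis, $\mathcal{A}$ contributes $\Oh{MN\log\Delta}$ bits; each coordinate of $b'$ is bounded by $N\Delta(U+1) = (NM\Delta)^{\Oh{M}}$ and hence needs $\Oh{M\log(NM\Delta)}$ bits, giving $\Oh{M^2\log(NM\Delta)}$ bits total; and each of the $2N$ variable bounds $\ell_i, u_i$ has absolute value at most $U+1$ and hence needs $\Oh{M\log(M\Delta)}$ bits, giving $\Oh{NM\log(NM\Delta)}$ bits in total. Summing yields $\Oh{M(M+N)\log(NM\Delta)} \subseteq \Oh{M^2 N \log(NM\Delta)}$, matching the stated bound.

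The main subtlety I foresee concerns the lower bounds on $y$: naively, $y \geq -\lfloor x^*\rfloor$ could force encoding entries of $\lfloor x^*\rfloor$ whose size is only polynomial in the original input (via Cramer's rule applied to the LP vertex) and not in the target kernel parameters. The fix is exactly the truncation $\ell_i := \max(-\lfloor x^*_i\rfloor, -(U+1))$: once the upper bound $\|y\|_\infty \leq U+1$ from proximity is in place, any lower bound below $-(U+1)$ is redundant and may be clipped, keeping every recorded number of magnitude at most $U+1$. All remaining steps are routine arithmetic on the transformed instance.
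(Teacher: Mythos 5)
Your proof is correct and follows essentially the same approach as the paper: solve the LP relaxation, invoke the Eisenbrand--Weismantel proximity bound, and shift the variables by (roughly) the rounded LP vertex so that the remaining ILP has small boxed variable ranges. The only cosmetic difference is the choice of shift: the paper shifts by $\max\{0,\lceil x^*_i - M(2M\Delta+1)^M\rceil\}$ so that the resulting box is the uniform $[0,\,2M(2M\Delta+1)^M]$ with lower bound $0$, whereas you shift by $\lfloor x^*_i\rfloor$ and then truncate per-coordinate lower bounds; both yield the same $\Oh{M^2N\log(NM\Delta)}$ size.
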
 

We now investigate the implications of \Cref{thm:reducenonconstructive} for (static) equivalent instances. The core idea is to encode the problem as a vector and then use \Cref{thm:reducenonconstructive} to get an equivalent small vector. We show that this small vector gives indeed a small static equivalent instance of our problem.
\begin{theorem}\label{thm:ilpkernel}
Feasibility ILPs have a static equivalent instance of size $\Oh{MN^2\log(NU)}$, where $U$ is an upper bound for the $\ell_\infty$-norm of the smallest feasible solution.
\end{theorem}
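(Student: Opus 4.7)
The plan is to apply Theorem~\ref{thm:reducenonconstructive} (in its constructive form, Theorem~\ref{thm:reduceconstructive}) row by row to the constraint matrix. To turn the affine equality $a_i^T x = b_i$ of each row into a linear form so that the theorem applies, I use the standard augmentation trick: append $-b_i$ to $a_i$ as an extra coordinate, and view the $(N+1)$-th coordinate of the variable vector as fixed to $1$. Since $U$ bounds the $\ell_\infty$-norm of the smallest feasible solution, I first append the explicit bounds $-U \leq x \leq U$ to the ILP; this does not change its feasibility status, and it lets me carry out the coefficient reduction inside the box $[-U,U]^N$.

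\textbf{Reduction of a single row.} For each $i \in [M]$, let $a_i \in \mathbb{Z}^N$ be the $i$-th row of $\mathcal{A}$, and define $\tilde{a}_i := (a_i,-b_i) \in \mathbb{Z}^{N+1}$. For any $x \in \mathbb{Z}^N$, write $\tilde{x} := (x,1) \in \mathbb{Z}^{N+1}$, so that $\tilde{a}_i^T \tilde{x} = a_i^T x - b_i$. Applying Theorem~\ref{thm:reduceconstructive} to $\tilde{a}_i$ in dimension $N+1$ with $\Delta := U$ (WLOG $U \geq 1$, so that $\tilde{x} \in [-U,U]^{N+1}$ whenever $x \in [-U,U]^N$) yields a $\bar{\tilde{a}}_i = (\bar{a}_i,-\bar{b}_i) \in \mathbb{Z}^{N+1}$ equivalent to $\tilde{a}_i$ on $[-U,U]^{N+1}$, with $\|\bar{\tilde{a}}_i\|_1 \leq (N+1)^2 (2\sqrt{N+1}\,U)^N$. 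In particular, every entry of $\bar{a}_i$ and $\bar{b}_i$ has magnitude at most this bound and hence can be encoded in $O(N\log(NU))$ bits.

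\textbf{Equivalence and size.} Applying the definition of equivalent vectors with $y = 0 \in [-U,U]^{N+1}$ in both directions yields $\tilde{a}_i^T \tilde{x} \geq 0 \iff \bar{\tilde{a}}_i^T \tilde{x} \geq 0$ and its reverse inequality; intersecting the two gives $\tilde{a}_i^T \tilde{x} = 0 \iff \bar{\tilde{a}}_i^T \tilde{x} = 0$. Specialising to $\tilde{x} = (x,1)$ this translates to $a_i^T x = b_i \iff \bar{a}_i^T x = \bar{b}_i$ for every $x \in [-U,U]^N$. Hence the ILP $\bar{\mathcal{A}} x = \bar{b}$ with bounds $-U \leq x \leq U$ has exactly the same integer feasible set as the bounded original and, by the choice of $U$, the same feasibility status, making it a static equivalent instance. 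Summing over the $M$ rows, each with $N+1$ entries of size $O(N\log(NU))$ bits, yields $O(MN^2\log(NU))$ bits in total, the extra bounds contributing only $O(N\log U)$. The main obstacle I anticipate is precisely the passage from the $\geq$-based notion of equivalent vectors (Definition~5) to the preservation of the affine equality: the augmentation trick and the WLOG assumption $U \geq 1$ must be set up so that both $(x,1)$ and $0$ lie in the reduction box $[-U,U]^{N+1}$, after which the claimed size bound follows immediately from the improved coefficient estimate of Theorem~\ref{thm:reducenonconstructive}.
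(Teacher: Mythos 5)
Your proof is correct and takes essentially the same approach as the paper: apply Theorem~\ref{thm:reducenonconstructive} row by row, using the augmentation trick to encode the right-hand side $b_i$ as an extra coordinate and recover the equality constraint from the $\geq$-based notion of equivalence. The only cosmetic differences are that you compare $\tilde{a}_i^T(x,1)$ against $\tilde{a}_i^T\mathbf{0}$ whereas the paper writes $w_i=(a_i,b_i)$ and compares $w_i(x,0)$ against $w_i(\mathbf{0},1)$, and that you explicitly add the box bounds $-U\leq x\leq U$ (a slightly more careful formalisation of what the paper does implicitly when it restricts attention to solutions with $\|x\|_\infty\leq U$).
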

\begin{proof}
Consider a feasibility ILP of the form
\begin{align*}
    \mathcal{A}x&= b \\
    x&\in\N^N
\end{align*}
with $\mathcal{A}\in\Z^{M\times N}$ and let $U$ be an upper bound for the $\ell_\infty$-norm of the smallest feasible solution to the ILP. Set $\Delta:=U$, consider any constraint $\mathcal{A}_i x= b_i$ (where $\mathcal{A}_i$ is the $i$-th row of $\mathcal{A}$) and define $w_i:=\begin{pmatrix}
\mathcal{A}_i^T \\ b_i
\end{pmatrix}$. Apply the reduction from~\Cref{thm:reducenonconstructive} to $w_i$ and $\Delta$ to obtain a vector $\bar{w}_i$ (consisting of $\bar{\mathcal{A}}_i^T$ and $\bar{b}_i$) with $\norm{\bar{w}_i}_1\leq k^2(2\sqrt{k}\Delta)^{k-1}$, where $k:=N+1$ is the dimension of $w_i$.

We now show that after doing this transformation for every constraint, $x\in\N^N$ is a solution of the original ILP if and only if it is a solution of the ILP described by the vectors $\bar{w}_i$: Consider any constraint $i$ and a solution $x\in\N^N$ satisfying $\norm{x}_\infty\leq U$. We have:
\begin{align*}
    && \mathcal{A}_ix&= b_i \\
    \iff&& \begin{pmatrix}\mathcal{A}_i^T \\ b_i\end{pmatrix}\begin{pmatrix} x \\ 0\end{pmatrix}&= \begin{pmatrix}\mathcal{A}_i^T \\ b_i\end{pmatrix}\begin{pmatrix}\mathbf{0} \\ 1\end{pmatrix} \\
    \iff&& w_i\begin{pmatrix} x \\ 0\end{pmatrix}&= w_i\begin{pmatrix}\mathbf{0} \\ 1\end{pmatrix} \\
    \iff&& \bar{w}_i\begin{pmatrix} x \\ 0\end{pmatrix}&= \bar{w}_i\begin{pmatrix}\mathbf{0} \\ 1\end{pmatrix} \\
    \iff&& \begin{pmatrix}\bar{\mathcal{A}}_i^T \\ \bar{b}_i\end{pmatrix}\begin{pmatrix} x \\ 0\end{pmatrix}&= \begin{pmatrix}\bar{\mathcal{A}}_i^T \\ \bar{b}_i\end{pmatrix}\begin{pmatrix}\mathbf{0} \\ 1\end{pmatrix} \\
    \iff&& \bar{\mathcal{A}}_ix&=\bar{b}_i
\end{align*}
The third equivalence follows from the equivalence of $w_i$ and $\bar{w}_i$ for vectors of $\ell_\infty$-norm at most $\Delta=U$. Since this holds for every constraint, every solution of the ILP defined by the $w_i$ is a solution of the ILP defined by the $\bar{w}_i$ and vice versa. The vectors $\bar{w}_i$ combined to a single vector $\bar{w}$ encode the whole ILP and are hence a static equivalent instance of size
\begin{align*}
O(Mk\log(\norm{\bar{w}}_\infty))&\leq O(Mk\log(\norm{\bar{w}_i}_1))\\
&\leq O(Mk\log(k^2(2\sqrt{k}\Delta)^{k-1}))\\
&=O(Mk^2\log(k\Delta)) \\
&=O(MN^2\log(NU))
\end{align*}
which concludes the proof.
\end{proof}

The upper bound $U$ for the $\ell_{\infty}$-norm of the smallest feasible solution can be upper bounded by $N(M\norm{\mathcal{A}}_{\infty})^{2M+3}(1 + \norm{b}_{\infty})$ by Theorem 13.4 in~\cite{PS82}. 

A static equivalent instance for optimality ILP can be obtained in a similar manner: If we need to check some target function, we can add this constraint to the ILP and add corresponding slack variables.
This yields static equivalent instances for several other problems that can be formulated as an ILP:

\begin{corollary}[\Rightscissors]\label{cor:knapsackkernel}
\KS has a static equivalent instance of size $\Oh{n^2\log(n))}$.
\end{corollary}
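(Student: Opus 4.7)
The plan is to encode a \KS instance as two linear constraints on $n$ binary variables and apply the coefficient-reduction argument from the proof of \Cref{thm:ilpkernel} separately to each constraint, exploiting the small bound $\Delta=1$ that comes from $x\in\{0,1\}^n$. First I would write the feasible set of a \KS instance with item weights $w_1,\hdots,w_n$, profits $p_1,\hdots,p_n$, capacity $C$ and target $T$ as $\{x\in\{0,1\}^n : \sum_i w_i x_i \leq C,\ \sum_i p_i x_i \geq T\}$, and observe that since candidate solutions live in $\{0,1\}^n$ it suffices to argue equivalence on $[-1,1]^{n+1}$. I would not plug into \Cref{thm:ilpkernel} as a black box, since that statement is written for equalities with $x\in\N^N$; converting the \KS inequalities to equalities by introducing slack variables would blow the bound $U$ up to at least $C$, losing control of the size. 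Instead, I would redo the per-constraint step of its proof with the refined parameter $U=1$.

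For the weight constraint, form the vector $\hat{w}:=(w_1,\hdots,w_n,-C)\in\Z^{n+1}$ and apply \Cref{thm:reducenonconstructive} with dimension $n+1$ and $\Delta=1$ to obtain $\bar{w}=(\bar{w}_1,\hdots,\bar{w}_n,-\bar{C})\in\Z^{n+1}$ with $\norm{\bar{w}}_1\leq (n+1)^2(2\sqrt{n+1})^n$. For any $x\in\{0,1\}^n$, the vectors $y:=(x,1)$ and $z:=\mathbf{0}$ both lie in $[-1,1]^{n+1}$; since $\hat{w}\cdot z=0=\bar{w}\cdot z$, the equivalence of $\hat{w}$ and $\bar{w}$ on $[-1,1]^{n+1}$ gives $\hat{w}\cdot y\leq 0\iff\bar{w}\cdot y\leq 0$, i.e.\ $\sum_i w_i x_i\leq C\iff\sum_i\bar{w}_i x_i\leq\bar{C}$. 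The same argument applied to $(p_1,\hdots,p_n,-T)$ yields reduced coefficients $(\bar{p}_1,\hdots,\bar{p}_n,\bar{T})$ with the same norm bound that satisfy $\sum_i p_i x_i\geq T\iff\sum_i\bar{p}_i x_i\geq\bar{T}$ for all $x\in\{0,1\}^n$. Thus the instance with data $(\bar{w},\bar{p},\bar{C},\bar{T})$ has exactly the same solution set as the original and is a static equivalent instance.

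Each of the $2(n+1)$ reduced coefficients has magnitude at most $(n+1)^2(2\sqrt{n+1})^n=2^{O(n\log n)}$ and thus fits in $O(n\log n)$ bits; the binary bounds $x_i\in\{0,1\}$ contribute only $O(n)$ bits. Summing over the two constraints yields a total encoding size of $O(n^2\log n)$, as claimed. The main subtlety will be justifying the use of $\Delta=1$ rather than a generic $U$, and handling the \KS inequalities directly without passing through a slack-variable reformulation; once the per-constraint step from the proof of \Cref{thm:ilpkernel} is adapted in this way, the size bound follows immediately from the norm estimate of \Cref{thm:reducenonconstructive}.
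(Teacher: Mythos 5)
Your proof is correct, and it takes a meaningfully different route from the paper's. The paper's proof of this corollary simply invokes \Cref{thm:ilpkernel} as a black box with $U:=1$, $N:=n+2$ and $M=2$, after adding two slack variables to turn the two \KS inequalities into equalities. You instead bypass \Cref{thm:ilpkernel} entirely and apply \Cref{thm:reducenonconstructive} per inequality, in dimension $n+1$, with $\Delta=1$.

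Your deviation is well-motivated: a slack variable such as $s_1 = C - \sum_i w_ix_i$ is not bounded by $1$ in general, so the $\ell_\infty$-bound on a feasible point of the equality ILP (the quantity $U$ is supposed to control in \Cref{thm:ilpkernel}) is not obviously $1$ once slacks are present. The paper's one-line proof glosses over this. Your version avoids the issue cleanly by keeping the two constraints as inequalities, forming the augmented vectors $(w_1,\hdots,w_n,-C)$ and $(p_1,\hdots,p_n,-T)$, and comparing $y=(x,1)$ with $z=\mathbf{0}$, both of which genuinely live in $[-1,1]^{n+1}$. Your verification that $\hat{w}\cdot y\leq 0\iff\bar{w}\cdot y\leq 0$ via the equivalence on $[-1,1]^{n+1}$ is exactly right, and the encoding-length calculation (each of the $2(n+1)$ reduced coefficients needs $O(n\log n)$ bits, giving $O(n^2\log n)$ total) matches the claimed bound. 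In short: same size bound, same underlying tool (\Cref{thm:reducenonconstructive}), but a more careful and self-contained derivation that sidesteps a subtlety the paper's terse proof leaves unaddressed.
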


\begin{corollary}[\Rightscissors]\label{cor:ssskernel}
\SSS has a static equivalent instance of size $\Oh{n^2\log(n)}$.
\end{corollary}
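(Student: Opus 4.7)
The plan is to encode the \SSS instance as a single vector and then apply \Cref{thm:reducenonconstructive} to it directly. This is more efficient than formulating \SSS as a general feasibility ILP and invoking \Cref{thm:ilpkernel}: encoding the $\{0,1\}$ bounds by slack variables would blow up both $M$ and $N$ to $\Theta(n)$ and yield only an $O(n^3 \log n)$ bound, whereas the single-constraint structure of \SSS gives $O(n^2 \log n)$.

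First, given an \SSS instance with positive integer weights $w_1, \dots, w_n$ and capacity $C$, I would form the vector $u := (w_1, \dots, w_n, -C)^T \in \Z^{n+1}$. For any $x \in \{0,1\}^n$, setting $y := (x, 1)^T \in \{0,1\}^{n+1} \subseteq [-1,1]^{n+1}$, one has $u^T y = \sum_i w_i x_i - C$ while $u^T \mathbf{0} = 0$. Hence $x$ is a solution of the \SSS instance if and only if $u^T y = u^T \mathbf{0}$.

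Next, I would invoke \Cref{thm:reducenonconstructive} with $N = n+1$ and $\Delta = 1$ to obtain $\bar{u} \in \Z^{n+1}$ equivalent to $u$ on $[-1,1]^{n+1}$, with $\|\bar{u}\|_1 \leq (n+1)^2 (2\sqrt{n+1})^n$. Writing $\bar{u} = (\bar{w}_1, \dots, \bar{w}_n, -\bar{C})^T$, the equivalence applied to both $\geq$ and $\leq$ directions gives $u^T y = u^T \mathbf{0} \iff \bar{u}^T y = \bar{u}^T \mathbf{0}$, which translates to $\sum_i w_i x_i = C \iff \sum_i \bar{w}_i x_i = \bar{C}$. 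Thus the \SSS instance with weights $\bar{w}_i$ and capacity $\bar{C}$ has exactly the same solution set as the original, yielding a static equivalent instance. By \Cref{lem:sign}, the positivity of the $w_i$ and of $C$ is preserved, so the reduced instance is a valid \SSS instance.

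Finally, for the size analysis: each entry of $\bar{u}$ has absolute value at most $\|\bar{u}\|_1$, hence is encodable in $O(\log \|\bar{u}\|_1) = O(\log((n+1)^2(2\sqrt{n+1})^n)) = O(n \log n)$ bits. Since $\bar{u}$ has $n+1$ entries, the total encoding size is $O((n+1) \cdot n \log n) = O(n^2 \log n)$. The only subtlety is verifying that the single-vector encoding faithfully captures the equality constraint under equivalence, but this follows immediately from the definition of equivalent vectors, which preserves both inequality directions simultaneously; there is no real obstacle beyond this bookkeeping.
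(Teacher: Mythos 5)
Your proof is correct, but it takes a different route from the paper's: you encode \SSS directly as a single $(n{+}1)$-dimensional vector $u=(w_1,\dots,w_n,-C)$ and apply \Cref{thm:reducenonconstructive} with $\Delta=1$, whereas the paper derives \Cref{cor:ssskernel} as an immediate consequence of \Cref{cor:knapsackkernel} (\SSS being the special case of \KS with $p_i=w_i$ and exact capacity), which in turn invokes \Cref{thm:ilpkernel} with $M=2$, $N=n+2$, $U=1$. Your direct argument is slightly leaner and avoids introducing slack variables entirely, and both routes land on the same $\Oh{n^2\log n}$ bound.

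One point of your motivation is mistaken, however: you claim that going through \Cref{thm:ilpkernel} would require encoding the $\{0,1\}$ bounds via $\Theta(n)$ slack constraints and hence only yield $\Oh{n^3\log n}$. That is not how \Cref{thm:ilpkernel} is applied in the paper. The parameter $U$ there is an upper bound on $\norm{x}_\infty$ of the smallest feasible solution, and the equivalence from \Cref{thm:reducenonconstructive} is only required (and only guaranteed) on $[-U,U]^N$; setting $U=1$ captures the $\{0,1\}$ restriction without any per-variable constraints. The paper therefore needs only the two slack variables for the two inequality constraints of \KS, keeping $M=2$ and $N=n+2$, and obtains $\Oh{MN^2\log(NU)}=\Oh{n^2\log n}$. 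Your direct encoding with one equality constraint and no slack variables is a valid and marginally tighter alternative, but the ILP route is not as lossy as you suggest.
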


\begin{corollary}[\Rightscissors]\label{cor:ukskernel}
\UKS has an equivalent instance of size $\Oh{n^2\log^2(C)\log(n\log(C))}.$
\end{corollary}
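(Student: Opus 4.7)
The plan is to reduce \UKS to 0-1 \KS via a standard binary expansion and then apply \Cref{cor:knapsackkernel} to the resulting \KS instance.

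First, I would construct a 0-1 \KS instance $I'$ from the \UKS instance $I$ as follows. Let $k := \lceil \log_2(C+1) \rceil$. For each item $i \in [n]$ with weight $w_i$ and profit $p_i$, and each $j \in \{0, 1, \ldots, k-1\}$, create an item $(i,j)$ with weight $2^j w_i$ and profit $2^j p_i$; keep the capacity $C$ and the target profit $T$ unchanged. The instance $I'$ then has $n' := n \cdot k = O(n \log C)$ items.

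Next, I would exhibit the bijection between \UKS solutions of $I$ and 0-1 \KS solutions of $I'$. For any $y \in \{0,1\}^{n \cdot k}$, define $x \in \N^n$ by $x_i := \sum_{j=0}^{k-1} 2^j y_{i,j}$. Since every feasible \UKS assignment satisfies $x_i \le \lfloor C/w_i \rfloor \le C < 2^k$, every such $x$ has a unique binary representation as some $y$, and conversely every $y$ yields an $x \in \N^n$. The identities
\[
\sum_{i=1}^{n} w_i x_i = \sum_{i,j} 2^j w_i y_{i,j} \qquad \text{and} \qquad \sum_{i=1}^{n} p_i x_i = \sum_{i,j} 2^j p_i y_{i,j}
\]
show that $y$ is a feasible solution of $I'$ with profit at least $T$ iff the corresponding $x$ is a feasible solution of $I$ with profit at least $T$. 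The binary-to-integer map then plays the role of the function $f$ in the definition of equivalent instance (with empty pre-solutions).

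Finally, applying \Cref{cor:knapsackkernel} to $I'$ yields a static equivalent 0-1 \KS instance $I''$ of size
\[
O\bigl(n'^2 \log n'\bigr) = O\bigl((n \log C)^2 \log(n \log C)\bigr) = O\bigl(n^2 \log^2(C) \log(n \log C)\bigr).
\]
Composing the static equivalence $I' \leftrightarrow I''$ (which preserves the set of 0-1 solutions) with the binary bijection $I \leftrightarrow I'$ produces the required equivalent instance for \UKS of the claimed size. There is no real obstacle: the reduction is the textbook binary trick, the bijection check is a one-line calculation, and the heavy lifting is done inside \Cref{cor:knapsackkernel} (and, through it, in \Cref{thm:ilpkernel} and \Cref{thm:reducenonconstructive}). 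The only minor care needed is to handle the degenerate cases $w_i = 0$ (remove such items after taking them greedily if profitable) so that the bound $x_i \le C < 2^k$ is valid for all remaining items.
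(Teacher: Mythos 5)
Your proof is correct and takes essentially the same route as the paper: a binary-expansion reduction from \UKS to 0-1 \KS (the paper cites the classical reduction with item-specific copy counts $K_i := \lfloor\log(C/w_i)\rfloor$, you use a uniform $k := \lceil\log_2(C+1)\rceil$; both give $O(n\log C)$ items), followed by an application of \Cref{cor:knapsackkernel}, with the bijection between $y\in\{0,1\}^{nk}$ and multiplicities $x\in\N^n$ supplying the function $f$ that makes this an equivalent (rather than static equivalent) instance. Your explicit bijection argument and the note on $w_i=0$ items are fine and match what the paper leaves implicit.
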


Note that this equivalent instance is only useful if $T$, the threshold for the value of the solution, is much larger than $C$, the capacity: The equivalent instance only gets rid of the encoding of $T$, not of $C$. 

\begin{corollary}[\Rightscissors]\label{cor:mdimkskernel}
\MDKS has a static equivalent instance of size $\Oh{Mn^2\log(n)}$.
\end{corollary}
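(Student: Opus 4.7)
The plan is to apply \Cref{thm:reducenonconstructive} directly to each constraint vector of the \MDKS instance, exploiting the fact that the decision variables $x_i\in\{0,1\}$ have $\ell_\infty$-norm at most one, so we may take $\Delta=1$ in the reduction. This mirrors the argument that produces the static equivalent instance for ordinary \KS (\Cref{cor:knapsackkernel}) and bypasses the potentially large slack-induced bound $U$ that a naive application of \Cref{thm:ilpkernel} to an equality formulation would incur.

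First I would describe each of the $M+1$ linear constraints of the instance as a comparison of two vectors of $\ell_\infty$-norm at most one. For each capacity constraint $j\in[M]$, set $v_j := (w_j^{(1)},\dots,w_j^{(n)},-b_j)\in\Z^{n+1}$ and $y := (x_1,\dots,x_n,1)\in\{0,1\}^{n+1}$; then $\sum_i w_j^{(i)} x_i \leq b_j$ is equivalent to $v_j^T y \leq 0 = v_j^T \mathbf{0}$. Analogously, set $v_0 := (-p_1,\dots,-p_n,T)$, so the target constraint $\sum_i p_i x_i \geq T$ becomes $v_0^T y \leq 0$. Apply \Cref{thm:reducenonconstructive} with $N=n+1$ and $\Delta=1$ to each $v_j$ to obtain $\bar{v}_j \in \Z^{n+1}$ equivalent to $v_j$ on $[-1,1]^{n+1}$ and satisfying $\|\bar{v}_j\|_1 \leq (n+1)^2(2\sqrt{n+1})^n$.

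Since $y,\mathbf{0}\in[-1,1]^{n+1}$ for every $x\in\{0,1\}^n$, equivalence yields $v_j^T y \leq 0 \iff \bar{v}_j^T y \leq 0$ for each $j\in\{0,1,\dots,M\}$. Defining the new weight entries as the first $n$ entries of $\bar{v}_j$, the new capacity $\bar{b}_j$ as the negation of the $(n+1)$-th entry (and analogously new profits and target from $\bar{v}_0$), produces an \MDKS instance whose set of feasible Boolean vectors coincides with that of the original, i.e.\ a static equivalent instance. For the size, each $\bar{v}_j$ consists of $n+1$ integers of magnitude at most $(n+1)^2(2\sqrt{n+1})^n$, hence needs $\Oh{n\log n}$ bits per entry; aggregating over the $M+1$ constraints yields a total encoding of $\Oh{(M+1)(n+1)\cdot n\log n} = \Oh{Mn^2\log n}$ bits.

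The main technical point requiring care is that the reduced coefficients may change sign while classical \MDKS is sometimes stated with non-negative weights and profits. The problem definition in this paper places no such restriction, and since we only need to preserve the Boolean solution set, any sign change is absorbed into the encoding. A secondary subtlety is that \Cref{thm:reducenonconstructive} only guarantees equivalence on vectors of $\ell_\infty$-norm at most $\Delta$; here we evaluate solely on $(x,1)$ with $x\in\{0,1\}^n$ and on $\mathbf{0}$, both respecting the bound $\Delta=1$, so the equivalence is validly applied.
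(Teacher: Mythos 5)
Your proof is correct and reaches the same bound, though by a slightly more direct route than the paper. The paper simply invokes the wrapper \Cref{thm:ilpkernel} (with the same parameters used for \Cref{cor:knapsackkernel}), which is stated for equality-form feasibility ILPs and therefore tacitly passes through slack variables; you instead apply the core reduction \Cref{thm:reducenonconstructive} to each inequality constraint directly, comparing the two evaluation points $(x,1)$ and $\mathbf{0}$, both of which lie in $[-1,1]^{n+1}$. This has a real advantage: the choice $\Delta=1$ is immediately justified by $x\in\{0,1\}^n$, whereas in the equality-form route one would have to argue separately why slack variables --- whose values can be as large as the right-hand sides --- do not force $U$ (and hence $\Delta$) to grow; the paper leaves that point implicit. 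In substance the two arguments reduce the same row vectors, and your size accounting of $(M+1)(n+1)$ integer entries of magnitude at most $(n+1)^2(2\sqrt{n+1})^n$, i.e.\ $\Oh{n\log n}$ bits each, gives exactly the stated $\Oh{Mn^2\log n}$.
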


\section{An Equivalent Instance for \texorpdfstring{\LB}{LoadBalancing}}\label{sec:pcmaxkernel}
We now show an equivalent instance for \LB on identical machines that uses a balancing result by Knop, Koutecký, Levin, Mnich and Onn~\cite{KKLMO23}, the \confip and \Cref{thm:eisweis}. With the balancing result, we can pre-schedule many jobs. Afterwards, we define the ILP with all possible configurations and use the proximity result in order to get a fixed part of the solution. The equivalent instance describes the remaining open part of the solution.

\begin{theorem}\label{thm:lbkernel}
    \LB on identical machines has an equivalent instance of size $O(d^2\log(\pmax))$ that can be computed in time $\pmax^{O(d)}$.
\end{theorem}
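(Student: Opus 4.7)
The plan is to combine the balancing result of Lemma~\ref{lem:blemma}, the \confip formulation, and the proximity bound of Theorem~\ref{thm:eisweis} into a pipeline that pre-schedules the bulk of the jobs and leaves a residual \LB instance of size $O(d^2\log\pmax)$.

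First, I would apply Lemma~\ref{lem:blemma} to the input. After balancing I may assume that every machine carries at most $4d(4\pmax+1)^2$ jobs of each type, so every configuration I need to consider (i.e.\ every multiset of jobs whose total load lies in $[l,u]$) has each coordinate bounded by $\Delta := 4d(4\pmax+1)^2 = O(d\pmax^2)$ and total load at most $O(d^2\pmax^3)$. The set $\mathcal{C}$ of relevant configurations therefore satisfies $|\mathcal{C}| \leq (\Delta+1)^d = \pmax^{O(d)}$, and $\mathcal{C}$ can be enumerated in time $\pmax^{O(d)}$.

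Next, I set up the standard \confip with one variable $x_c$ per $c\in\mathcal{C}$: the machine constraint $\sum_c x_c = m$ and a demand constraint $\sum_c \mathrm{mult}_c(t)\,x_c = n_t$ for each type $t\in[d]$. This ILP has $M := d+1$ rows and constraint-matrix entries bounded by $\Delta$. I would solve its LP relaxation in time $\pmax^{O(d)}$ to obtain a vertex optimal solution $x^*$ whose support has size at most $M$. Theorem~\ref{thm:eisweis} then guarantees an optimal integer solution $z^*$ with $\|x^*-z^*\|_1 \leq M(2M\Delta+1)^M =: P$, and $P = \pmax^{O(d)}$.

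Then I define the pre-solution $\hat{x}_c := \max\{0,\,\lfloor x^*_c\rfloor - P\}$, which is a non-negative integer vector. Because $|z^*_c - x^*_c|\leq P$, we have $z^*_c \geq \hat{x}_c$ coordinate-wise, hence $y := z^* - \hat{x} \geq 0$ is an integer solution of the residual ILP $\mathcal{A}y = b'$ with $b' := b - \mathcal{A}\hat{x}$. Using the vertex-support bound $|\mathrm{supp}(x^*)|\leq M$, each summand of $\|x^*-\hat{x}\|_1$ is at most $P+1$ and there are at most $M$ non-zero summands, so $\|y\|_1 \leq P + (d+1)(P+1) = O(dP) = \pmax^{O(d)}$. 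Consequently every entry of $b'$ is bounded by $\Delta\cdot\|y\|_1 = \pmax^{O(d)}$. The residual ILP is precisely the \confip of a \LB instance with the same processing times $p$ and thresholds $l,u$, residual job vector $n'_t := b'_t$ for $t\in[d]$, and residual machine count $m':=b'_0$; combining any schedule of this residual instance with the partial schedule encoded by $\hat{x}$ yields a schedule of the original instance (and conversely, a schedule of the original decomposes into $\hat{x}$ plus a schedule of the residual by the argument just given), which matches the equivalent-instance definition with $\hat{x}$ as pre-solution.

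Finally, encoding the residual \LB instance needs $O(d\log\pmax)$ bits for $p$, $O(d\cdot d\log\pmax) = O(d^2\log\pmax)$ bits for $n'$, and $O(d\log\pmax)$ bits for $m'$, while $l$ and $u$ are bounded by $O(d^2\pmax^3)$ and cost only $O(\log d + \log\pmax)$ bits; in total $O(d^2\log\pmax)$. The overall running time is $\pmax^{O(d)}$, dominated by enumerating $\mathcal{C}$ and solving the LP. The main technical obstacle is to secure both non-negativity of $y$ and a small $\ell_1$-bound on $y$ simultaneously: non-negativity is forced by subtracting $P$ in the definition of $\hat{x}$, while the $\ell_1$-bound relies on the vertex property of $x^*$ to limit $\|x^*-\hat{x}\|_1$ to $O(dP)$.
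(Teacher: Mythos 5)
Your proof is correct and follows essentially the same pipeline as the paper's: balance via Lemma~\ref{lem:blemma}, build the \confip over the bounded configuration set, compute a vertex solution of the LP relaxation, invoke Theorem~\ref{thm:eisweis} to pre-assign $\max\{0,\lfloor x^*_c\rfloor-P\}$ copies of each configuration, and bound the encoding length of the residual \LB instance by $O(d^2\log\pmax)$ using $d\le\pmax$. The only slight difference is presentational (you bound the residual through $\|y\|_1$ and $b'$, while the paper counts the leftover machines directly), and you leave implicit that the pre-solution must also record the jobs fixed during the balancing step; neither affects correctness.
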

\begin{proof}
    Given an instance $I$ of \LB on identical machines with thresholds $\ell$ and $u$, we first use the preprocessing algorithm by Knop, Koutecký, Levin, Mnich and Onn~\cite{KKLMO23} (\Cref{lem:blemma}) to pre-schedule many of the jobs optimally in time $O(d)$. For the resulting instance $I'$, we are guaranteed that in some optimal schedule, on each machine, there are at most $4d(4\pmax +1)^2$ jobs of each job type scheduled. It has $m'=m$ machines, $n'_i\leq 4md^2(4\pmax +1)^2$ jobs of type $i\in[d]$ and thresholds $\ell',u'\leq 4d^2 \pmax (4\pmax +1)^2$.
    Now, consider the corresponding \confip
    \begin{align*}
        \matrix{c_1 & \hdots & c_{|\mathcal{C}|} \\
        1 & \hdots & 1}
        x &= \matrix{n'\\m'}\\
        x_c&\in\N\quad\forall c\in\mathcal{C}
    \end{align*}
    where
    $$\mathcal{C}:=\left\{\left.c\in\N^d\,\right|\,\ell'\leq p^Tc\leq u',\,\norm{c}_\infty\leq 4d(4\pmax +1)^2 \right\}$$
    is the set of configurations we need to consider for the reduced instance $I'$. We compute a vertex solution $x^*$ of the LP-relaxation of the \confip using, e.g., the algorithm by Tardos~\cite{Tardos86}. The (I)LP has $M=d+1$ constraints, $N=|\mathcal{C}|\leq (4d(4\pmax +1)^2)^d$ variables, the largest coefficient in the matrix is bounded by $\Delta= 4d(4\pmax +1)^2$ and the largest coefficient in the right-hand-side is bounded by $\norm{b}_\infty=\max\{\nmax,m\}$.
    Hence, the algorithm by Tardos~\cite{Tardos86} takes time 
    $$N^{O(1)}\log(\Delta)\leq \pmax^{O(d)}.$$
    By the proximity result of Eisenbrand and Weismantel (\Cref{thm:eisweis}), there exists an optimal solution $z^*$ of the \confip such that 
    $$\norm{x^*-z^*}_1\leq M(2M\Delta+1)^M\leq (d+1)(2(d+1)4d(4\pmax +1)^2+1)^{d+1}.$$
    It is well-known that vertex solutions have at most $M$ non-zero components, meaning that $x^*$ has at most $d+1$ non-zero components (cf. \cite{Schrijver86}). As $x^*$ is a solution of the LP-relaxation, it fulfills the constraint $\sum_{c\in\mathcal{C}}x^*_c=m$. Due to the proximity bound, there exists an optimal solution $z^*$ that only deviates from $x^*$ by $(d+1)(2(d+1)4d(4\pmax +1)^2+1)^{d+1}$ in every component. Subtracting this term from each component still leaves $m-(d+1)(d+1)(2(d+1)4d(4\pmax +1)^2+1)^{d+1}$ machines on which $z^*$ assigns the same configuration as $x^*$, and we know which configurations are assigned: Configuration $c\in\mathcal{C}$ is assigned precisely $\max\{0,\lfloor x^*_c\rfloor-(d+1)(2(d+1)4d(4\pmax +1)^2+1)^{d+1}\}$-times. Note that this approach is quite similar to the one used in the proximity parts of \cite{JKZ24,JK24}. 

    This leaves us with at most $(d+1)(d+1)(2(d+1)4d(4\pmax +1)^2+1)^{d+1}$ unassigned machines. The remaining instance, say $I''$, hence has
    \begin{itemize}
        \item $m''=(d+1)(d+1)(2(d+1)4d(4\pmax +1)^2+1)^{d+1}$ machines, 
        \item $n''_i\leq m''4d^2(4\pmax +1)^2$ jobs of type $i\in[d]$ and
        \item thresholds $\ell'',u''\leq 4d^2\pmax(4\pmax +1)^2$.
    \end{itemize}
    So the encoding length of $I''$, and hence the size of the equivalent instance, is bounded by:
    \begin{align*}
        &\log(m'')+\log(l'')+\log(u'')+d\log(p_{\max}+n''_{\max})\\
        =&\log((d+1)(d+1)(2(d+1)4d(4\pmax +1)^2+1)^{d+1})+2\log(4d^2\pmax(4\pmax +1)^2)\\
        &+d\log(p_{\max}+(d+1)(d+1)(2(d+1)4d(4\pmax +1)^2+1)^{d+1}4d^2(4\pmax +1)^2)\\
        \leq&O(d^2\log(\pmax))
    \end{align*}
    Again, recall that $d\leq\pmax$. In a pre-solution we can save the assignment of jobs to machines done in the preprocessing algorithm as well as the known assignment resulting from the vertex solution of the LP-relaxation of the \confip. Such a pre-solution combined with the solution of the equivalent instance $I''$ results in a solution for the original \LB instance. Computing this equivalent instance takes time $O(d) + \pmax^{O(d)} + O(d|\mathcal{C}|)\leq \pmax^{O(d)}$ for balancing, the LP-solver and pre-assigning configurations, concluding the proof.
\end{proof}

In the context of decision problems, \tf{\P}{}{\cmax} and \tf{\P}{}{\cmin} are special cases of \LB. \tf{\P}{}{\envy} is equivalent to \LB if we assume that $\ell$ and $u$ are given. Hence, \Cref{thm:lbkernel} also yields the following result:
\begin{corollary}
    The problems \tf{\P}{}{\{\cmax,\cmin\}} have an equivalent instance of size $O(d^2\log(\pmax))$ that can be computed in time $\pmax^{O(d)}$. The same holds for \tf{\P}{}{\envy} if $\ell$ and $u$ are given.
\end{corollary}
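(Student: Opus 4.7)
The plan is to observe that each of the three problems reduces directly to \LB, so that \Cref{thm:lbkernel} can be invoked as a black box. This observation is already made in the paragraph preceding the corollary: \tf{\P}{}{\cmax} is the special case of \LB in which $\ell=0$, \tf{\P}{}{\cmin} is the special case in which $u=\infty$, and \tf{\P}{}{\envy} with thresholds $\ell$ and $u$ given is literally the decision question asking whether every machine's load lies in $[\ell,u]$.

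Concretely, for \tf{\P}{}{\cmax} the decision version with target $T$ becomes the \LB instance with $(\ell,u)=(0,T)$. For \tf{\P}{}{\cmin} with target $T$ we set $\ell=T$ and replace the nominal value $u=\infty$ by any finite load bound that is trivially satisfied by every schedule, for instance $u:=p^T n$, the sum of all processing times. For \tf{\P}{}{\envy}, both $\ell$ and $u$ are provided by assumption, so the reduction is immediate.

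In each case, the reduction preserves the processing time vector $p$ and the job vector $n$ and only rewrites the threshold information, so neither $d$ nor $\pmax$ is changed. Feeding the resulting \LB instance into \Cref{thm:lbkernel} then yields an equivalent instance of size $O(d^2\log(\pmax))$, computable in time $\pmax^{O(d)}$. The (possibly enlarged) thresholds require only $O(\log(p^T n))$ additional bits to describe as input to the \LB routine, which is absorbed by the size bound of the output instance produced by \Cref{thm:lbkernel}.

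The only step that warrants any care is the replacement of $u=\infty$ in the \tf{\P}{}{\cmin} reduction by a finite value: one has to argue that the cap $u:=p^T n$ does not exclude any schedule that would have been feasible for the original \cmin\ problem, which is immediate since no single machine's load can exceed the total processing time. I do not expect any further obstacle.
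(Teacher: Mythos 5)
Your proposal is correct and matches the paper's intent: the paper likewise treats the corollary as an immediate specialization of \Cref{thm:lbkernel}, noting in the preceding paragraph that \tf{\P}{}{\cmax} and \tf{\P}{}{\cmin} are special cases of \LB and that \tf{\P}{}{\envy} with given thresholds is literally \LB. Your additional care in replacing the informal $u=\infty$ by the finite trivial bound $u=p^Tn$ (so that the instance fits the formal definition with $u\in\N$) is a sensible and harmless tightening of what the paper leaves implicit.
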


\section{Conclusion}\label{sec:conclusion}
We slightly improved the bound for the $\ell_1$-norm of the equivalent vector from $N(4N\Delta)^{N-1}$ to $N(2\sqrt{N}\Delta)^{N-1}$ and showed how this yields (static) equivalent instances for ILPs and related problems. Alternatively, we gave a kernel for feasibility ILPs. Moreover, we presented an equivalent instance for \LB and obtained with this an equivalent instance for \tf{\P}{}{\{\cmax,\cmin\}}.

\begin{credits}
\subsubsection{\ackname} This study was funded by the Deutsche Forschungsgemeinschaft (DFG, German Research Foundation) – Project Number 453769249. We thank Martin Koutecký for his very helpful comments and guidance regarding balancing and the kernel of $\PCmax$.
\end{credits}


\bibliography{bibIwoca}

\appendix
\section{Equivalent Instances via Coefficient Reduction}
In \Cref{sec:proofs} we list the missing proofs of \Cref{sec:eqcoef} and in \Cref{sec:specialilp} we investigate the implications of \Cref{thm:ilpkernel} for special ILPs, namely \twostage and \nfold.

\subsection{Proofs}\label{sec:proofs}
\begin{proof}[\Cref{lem:generatorsalternative}]
Let $S$ be the set of generators given by \Cref{prop:minkowski}. We scale each of the generators in $S$ to an integer vector.

Consider some $v\in S$ and let $B$ be the corresponding sub-matrix of $\begin{pmatrix}A \\ I\end{pmatrix}$ such that $v$ is the unique solution to $Bx=\pm e_k$ for some $k\in[N]$. By Cramer's rule~\cite{DBLP:books/lib/ChabertB99,Cramer1750}, the $i$-th entry of $v$ is given by:
\begin{align*}
    v_i=\frac{\det{B^{(i)}}}{\det{B}}
\end{align*}
Here, $B^{(i)}$ is the matrix $B$ but with the $i$-th column replaced by the right-hand-side vector $\pm e_k$. It is important to note that the denominator is the same for all entries in $v$, so scaling the whole vector by $\det{B}$ yields an integer vector $z$ with entries bounded by 
\begin{align*}
    z_i\ &= \det{B^{(i)}}\leq \prod_{j=1}^N\norm{B_j^{(i)}}_2 =\left(\prod_{\substack{j=1 \\ j\neq i}}^N\norm{B_j^{(i)}}_2\right)\norm{\pm e_k}_2 =\prod_{\substack{j=1 \\ j\neq i}}^N\norm{B_j^{(i)}}_2 \\ &= \prod_{\substack{j=1 \\ j\neq i}}^N\sqrt{\sum_{\ell=1}^N\left(B_{j,\ell}^{(i)}\right)^2} 
    \leq\prod_{\substack{j=1 \\ j\neq i}}^N\sqrt{\sum_{\ell=1}^N\norm{B}_{\infty}^2} =\prod_{\substack{j=1 \\ j\neq i}}^N\sqrt{N\norm{B}_{\infty}^2} =\prod_{\substack{j=1 \\ j\neq i}}^N\sqrt{N}\norm{B}_{\infty} \\ &=(\sqrt{N}\norm{B}_{\infty})^{N-1}
\end{align*}
using the Hadamard inequality for determinants~\cite{Hadamard1893}. $B^{(i)}_j$ denotes the $j$-th column of $B^{(i)}$ and $B^{(i)}_{j,\ell}$ is the $\ell$-th entry in that column. Using the fact that $\norm{B}_{\infty}\leq\norm{A}_{\infty}$ and scaling up each $v\in S$ to an integer vector $z$ concludes the proof. Note that we get an extra factor $N$ in the $\ell_1$-norm bound.
\end{proof}

\begin{proof}[\Cref{thm:reducibilitylowerbound}]
For every $i\in[N-1]$, let $v^{(i)}$ be the $N$-dimensional vector with all entries zero except $v^{(i)}_i=\Delta$ and let $u^{(i)}$ be the $N$-dimensional vector with all entries zero except $u^{(i)}_{i+1}=1$. Note that $v^{(i)},u^{(i)}\in[-\Delta,\Delta]^N$. For every $i\in[N-1]$, we have $w^Tv^{(i)}=\Delta^{i-1}\Delta=\Delta^i=w^Tu^{(i)}$. Let $\bar{w}\in\Z^N$ be a vector that is equivalent to $w$ on interval $[-\Delta,\Delta]$. Then we also have $\bar{w}^Tv^{(i)}=\bar{w}^Tu^{(i)}$ and hence $\bar{w}_i \Delta=\bar{w}_{i+1}$ for every $i\in[N-1]$. Iterating this equality constraint through all $i\in[N-1]$ (starting at $i+1=N$), we get
\[\bar{w}_N=\bar{w}_{N-1}\Delta=\hdots=\bar{w}_2\Delta^{N-2}=\bar{w}_1\Delta^{N-1}.\]
Since $w$ is non-zero, $\bar{w}$ must also be non-zero; otherwise they would not be equivalent on $[-\Delta,\Delta]^N$. As $\bar{w}$ is integer, the above equalities imply that $|\bar{w}_1|\geq1$; otherwise all the equal terms would be zero. Now, $\norm{\bar{w}}_1\geq |w_N'|=|\bar{w}_1|\Delta^{N-1}\geq \Delta^{N-1}$, which concludes the proof.
\end{proof}
\begin{proof}[\Cref{lem:sign}]
Consider the points $e_i$ and $\mathbf{0}$ in the interval $[-\Delta,\Delta]$. Then $w_i=we_i\leq w\mathbf{0}=0$ if and only if $\bar{w}_i=\bar{w}e_i\leq \bar{w}\mathbf{0}=0$ by the equivalence of $w$ and $\bar{w}$ on the interval $[-\Delta,\Delta]$. By swapping the vectors, we not only get $w_i\leq0\iff\bar{w}_i\leq0$, but also $w_i\geq0\iff\bar{w}_i\geq0$. So the signs of each entry are identical.
\end{proof}
\begin{proof}[\Cref{thm:reduceconstructive}]
Consider again the cone
\[C:=\bigcap_{\substack{(u,v)\in[-\Delta,\Delta]^N\times[-\Delta,\Delta]^N\\wu\geq wv}}\left\{\left.x\in\R^N\,\right|\,(u-v)x\geq0\right\}\]
from \Cref{thm:reducenonconstructive}. As we already established, $w\in C$ and the points in the relative interior of $C$ are equivalent to $w$ on $[-\Delta,\Delta]^N$. Shifting the right-hand-side to 1 instead of 0 for some of the halfspaces (those where $wu\neq wv$) yields a translated cone which corresponds to the relative interior of $C$. The vector $\bar{w}$ can be computed by solving the ILP corresponding to the relative interior of the cone. This can be done using the algorithm by Reis and Rothvoss~\cite{RR23} in time $(\log(N))^{\Oh{N}}((N+1)(2\Delta+1)^{2N}2\Delta))^{O(1)}$, as the ILP has $N$ variables, at most $((2\Delta+1)^N)^2$ inequality constraints and each coefficient is bounded by $2\Delta$. This running time is in $(N\Delta)^{\Oh{N}}$. It is important to note that the ILP is bounded if we minimize the $\ell_1$-norm of the solution, which can be done with a linear objective, because we know the sign of each entry in the solution (because we know the signs of the entries of $w$ and they have to be equal by \Cref{lem:sign}).

\end{proof}
\begin{proof}[\Cref{thm:ilpkernel2}]
Consider a feasibility ILP of the form
\begin{align*}
    \mathcal{A}x&= b \\
    x&\in\N^N
\end{align*}
with $\mathcal{A}\in\Z^{M\times N}$ and let $\Delta = \norm{\mathcal{A}}_{\infty}$. By the proximity result of Eisenbrand and Weismantel (Theorem \ref{thm:eisweis}), there exists an optimal solution $z^*$ of the ILP such that
\begin{align*}
    \norm{x^*-z^*}_1 \leq M(2M\Delta + 1)^M.
\end{align*}
Let $x^*$ be a solution of the LP-relaxation. We can get $x^*$ with the algorithm of Tardos~\cite{Tardos86} in polynomial time. Because an optimal solution $z^*$ only deviates from $x^*$ by $M(2M\Delta + 1)^M$ in every component, we can pack $\max\{0, \lceil x_i^*-M(2M\Delta+1)^M \rceil\}$ items for each $i\in[N]$ in a preprocessing step. Thus, the following ILP is a kernel of the given ILP.
\begin{align*}
    \mathcal{A}x'&=b' \\
    x'&\in\N^N \\
    0 \leq \ &x'_i \leq 2M(2M\Delta + 1)^M \text{ for all } i\in [N]
\end{align*}
with $b'= b - \mathcal{A}(\max\{\mathbf{0},\lceil x^*-p \rceil\})$ and $p = \begin{pmatrix}
    M(2M\Delta + 1)^M \\ \vdots \\M(2M\Delta + 1)^M
\end{pmatrix}$. So $0 \leq b'_i \leq N\Delta 2 M(2M\Delta + 1)^M$ holds for all $i \in [N]$.

This kernel has size
\begin{align*}
   O(MN\log( N\Delta 2 M(2M\Delta + 1)^M))\ =O(M^2N\log(NM\Delta))
\end{align*}
which concludes the proof.
\end{proof}

\begin{proof}[\Cref{cor:knapsackkernel}]
This follows directly from \Cref{thm:ilpkernel} using $U:=1$, $N:=n+2$ and $M=2$. Note that the decision version of \KS has two inequality constraints -- one for the capacity and one for the profit -- and hence we also need two slack variables to get equality constraints.
\end{proof}

\begin{proof}[\Cref{cor:ssskernel}]
This follows directly from \Cref{cor:knapsackkernel}, as \SSS is a special case of \KS where all items have profit $p_i = w_i,i\in[n]$.
\end{proof}

\begin{proof}[\Cref{cor:ukskernel}]
Consider an instance of \UKS, i.e., profits $p_1,\hdots,p_n$ and threshold $T$, weights $w_1,\hdots,w_n$ and a capacity $C$. The goal is to solve the following ILP:
\begin{align*}
    \sum_{i=1}^np_ix_i\geq T \\
    \sum_{i=1}^nw_ix_i\leq C \\
    x_i\in\N \quad \forall i\in[n]
\end{align*}
Using a classical reduction from \UKS to \KS (see, e.g.,~\cite{CMWW19}), we get the following form
\begin{align*}
    \sum_{i=1}^n\sum_{j=1}^{K_i}p_{i,j}x_{i,j}\geq T\\
    \sum_{i=1}^n\sum_{j=1}^{K_i}w_{i,j}x_{i,j}\leq C \\
    x_{i,j}\in\{0,1\} \quad \forall i\in[n],j\in[K_i]
\end{align*}
where $K_i:=\lfloor\log(\frac{C}{w_i})\rfloor$, $w_{i,j}:=2^jw_i$ and $p_{i,j}:=2^jp_i$. Note that an item $i$ can only be picked $\lfloor\frac{C}{w_i}\rfloor$ times without over-packing. So the idea of the reduction is to create copies of each item such that instead of picking an item $i$ $k$-times, we can pick the copies of $i$ corresponding to $k$ in binary. Since $K_j\leq \log(C)$, the parameters of the constructed instance are as follows:
\begin{itemize}
    \item $n'\leq n\log(C)$
    \item $w_{max}'\leq w_{max}C$
    \item $p_{max}'\leq p_{max}C$
\end{itemize}
Using the static equivalent instances for \KS from \Cref{cor:knapsackkernel}, we get an equivalent instance of size
\begin{align*}
    \Oh{n'^2\log(n')}=\Oh{n^2\log^2(C)\log(n\log(C))}
\end{align*}
for \UKS.
\end{proof}

\begin{proof}[\Cref{cor:mdimkskernel}]
Recall that in \MDKS, we have $n$ items that are $M$-dimensional and an $M$-dimensional knapsack. We are allowed to pack each item only once, so using \Cref{thm:ilpkernel}, the problem \MDKS has a static equivalent instance of size $\Oh{Mn^2\log(n)}$.
\end{proof}

\subsection{Special ILPs}\label{sec:specialilp}
In this section, we show the implications of \Cref{thm:ilpkernel} for \twostage and \nfold.
\begin{definition}
    A \twostage is an ILP where the matrix $\mathcal{A}$ has the following structure $\mathcal{A}_1$. Here, $A_i \in \mathbb{Z}^{s\times r}$ and $B_i \in \mathbb{Z}^{s \times t}$ for all $i \in [n]$. A \nfold is an ILP where the matrix $\mathcal{A}$ has the following structure $\mathcal{A}_2$. Here, $A_i \in \mathbb{Z}^{r \times t}$ and $B_i \in \mathbb{Z}^{s \times t}$ for all $i \in [n]$.
\end{definition}
    \begin{tabular}{rc}
    $\mathcal{A}_1 := \begin{pmatrix}
A_1 & B_1 & \mathbf{0} & \ldots & \mathbf{0} \\
A_2 & \mathbf{0} & B_2 & & \vdots \\
\vdots & \vdots &  & \ddots & \mathbf{0} \\
A_n & \mathbf{0} & \ldots & \mathbf{0} & B_n
\end{pmatrix}$ & $\mathcal{A}_2:= \begin{pmatrix}
A_1 & A_2   & \ldots & A_n \\
B_1 & \mathbf{0} & \ldots  & \mathbf{0} \\
\mathbf{0} & B_2 &    \ddots &  \vdots\\
\vdots & \ddots & \ddots & \mathbf{0} \\
\mathbf{0} & \ldots & \mathbf{0} &   B_n
\end{pmatrix}$ \\
\twostage & \nfold
\end{tabular}
\begin{corollary}\label{cor:2stage}
\twostage has a static equivalent instance of size $\Oh{ns(r+t)^2\log((r+t)U)}$, where $U$ is an upper bound for the $\ell_{\infty}$-norm of the smallest feasible solution. 
\end{corollary}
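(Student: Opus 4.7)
The plan is to mimic the proof of \Cref{thm:ilpkernel}, but exploit the sparsity of the 2-stage block structure so that the per-row reduction operates on dimension $r+t+1$ instead of $N+1 = r+nt+1$. The key observation is that every row of the constraint matrix $\mathcal{A}_1$ has non-zero entries in at most $r+t$ columns: the $r$ shared first-stage columns, and the $t$ second-stage columns of exactly one block $B_k$.

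Concretely, I would proceed as follows. For each row $i$ of $\mathcal{A}_1$, let $k(i) \in [n]$ be the block it belongs to, let $\mathcal{A}_i^{\mathrm{red}} \in \mathbb{Z}^{r+t}$ be the restriction of row $i$ to the $r+t$ potentially non-zero columns, and form $w_i := ((\mathcal{A}_i^{\mathrm{red}})^T, b_i)^T \in \mathbb{Z}^{r+t+1}$. Applying \Cref{thm:reducenonconstructive} with $\Delta := U$ to each $w_i$ yields an equivalent vector $\bar{w}_i$ of $\ell_1$-norm at most $(r+t+1)^2(2\sqrt{r+t+1}\,U)^{r+t}$. Running the same equivalence chain as in the proof of \Cref{thm:ilpkernel} on the subvector $x^{(k(i))} \in \mathbb{N}^{r+t}$ consisting of the entries of $x$ in those columns shows that, for every $x \in \mathbb{N}^N$ with $\norm{x}_\infty \leq U$, the original constraint $\mathcal{A}_i x = b_i$ is satisfied if and only if the reduced constraint $\bar{\mathcal{A}}_i^{\mathrm{red}} x^{(k(i))} = \bar{b}_i$ is satisfied; both vectors being compared lie in $[-U,U]^{r+t+1}$, which is exactly the cube on which equivalence holds.

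Since the zero columns of a row are never touched, the 2-stage block sparsity pattern of $\mathcal{A}_1$ is preserved by construction (and by \Cref{lem:sign}, applying the reduction to the full row could not introduce non-zero coefficients in place of zeros either, so restricting to non-zero columns is merely a bookkeeping convenience). The resulting ILP is therefore a static equivalent instance that can be encoded by storing, for each of the $M = ns$ rows, its $r+t$ non-zero coefficients together with a right-hand side entry. Each entry has $\ell_\infty$-norm bounded by $\norm{\bar{w}_i}_1$, hence takes $O((r+t)\log((r+t)U))$ bits. Summing over the $ns$ rows gives a total encoding length of
\[
O\bigl(ns \cdot (r+t) \cdot (r+t)\log((r+t)U)\bigr) = O\bigl(ns(r+t)^2\log((r+t)U)\bigr),
\]
matching the bound claimed in the corollary.

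The only conceptual step to be careful about is that the per-row reduction is legitimate when applied to a projected subvector rather than the full row; this is immediate, since the satisfaction of a constraint depends only on the coefficients multiplied by variables actually carrying a non-zero coefficient, and \Cref{thm:reducenonconstructive} delivers equivalence on the relevant bounded cube. No additional proximity or balancing tool is needed here, so the argument is essentially a structural refinement of \Cref{thm:ilpkernel}.
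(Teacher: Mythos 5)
Your proof is correct and follows essentially the same route as the paper: both arguments exploit the 2-stage block structure so that the per-row coefficient reduction (\Cref{thm:reducenonconstructive}) operates in dimension $r+t+1$ rather than $r+nt+1$, yielding the stated bound. The paper phrases this as applying \Cref{thm:ilpkernel} once per row-block, while you unfold that same application row-by-row with an explicit restriction to the non-zero columns, but the computation and the resulting equivalent instance are the same.
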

\begin{proof}
We use \Cref{thm:ilpkernel} for each row-block. This way, we get $n$ static equivalent instances, each of size $\Oh{s(r+t)^2\log((r+t)U)}$. Hence, this yields a static equivalent instance of size $\Oh{ns(r+t)^2\log((r+t)U)}$. A vector $x^*$ is a solution of the original ILP, if and only if its respective parts are solutions of the row-block ILPs. This is the case if and only if the parts are solutions of the transformed row-block ILPs, which is equivalent to $x^*$ being a solution to the whole transformed ILP.
\end{proof}

\begin{corollary}\label{cor:nfold}
\nfold has a static equivalent instance of size $\Oh{n^2t^2rs\log(ntU)}$, where $U$ is an upper bound for the $\ell_{\infty}$-norm of the smallest feasible solution. 
\end{corollary}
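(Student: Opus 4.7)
The plan is to mirror the proof of \Cref{cor:2stage}, applying \Cref{thm:ilpkernel} row-block by row-block in order to exploit the block structure of $\mathcal{A}_2$; in particular, the narrow supports of its bottom row-blocks. The matrix $\mathcal{A}_2$ splits into a single top row-block $(A_1\mid A_2\mid\cdots\mid A_n)$ of $r$ rows touching all $nt$ variables, and, for each $i\in[n]$, a bottom row-block $B_i$ of $s$ rows that only touches the $t$ variables of column-block $i$. A vector $x\in\N^{nt}$ is feasible for the whole ILP iff it satisfies every row-block, so it suffices to produce a static equivalent instance of each row-block separately and glue them back into the same $n$-fold shape with the reduced coefficients.

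Next, I would apply \Cref{thm:ilpkernel} to each row-block in turn. For the top row-block, using $M=r$ and $N=nt$, we obtain a static equivalent of size $\Oh{r(nt)^2\log(ntU)}=\Oh{rn^2t^2\log(ntU)}$. For the $i$-th bottom row-block, only the $t$ coordinates of column-block $i$ appear with non-zero coefficients, and by \Cref{lem:sign} the zero coordinates stay zero under the per-row reduction; hence \Cref{thm:ilpkernel} applied to the effective $s\times t$ subsystem yields size $\Oh{st^2\log(tU)}$, and summing over $i\in[n]$ gives $\Oh{nst^2\log(tU)}$.

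Adding the two contributions gives total size $\Oh{rn^2t^2\log(ntU)+nst^2\log(tU)}$, and since $rn^2+ns=\Oh{rsn^2}$ for $r,s,n\geq 1$, this collapses to the claimed $\Oh{n^2t^2rs\log(ntU)}$. The only conceptual point worth checking is that the row-block-wise application is sound: the bound $U$ on a feasible solution of the original ILP restricts to a bound $U$ on its projection onto any column-block, so the equivalence guaranteed by \Cref{thm:ilpkernel} remains valid for every row-block, and gluing the reduced blocks preserves the full solution set. Beyond this mild sanity check, the argument is pure bookkeeping and presents no real obstacle.
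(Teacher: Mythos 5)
Your proof is correct and follows essentially the same route as the paper's: apply \Cref{thm:ilpkernel} once to the linking constraints (the top $r\times nt$ row-block) and once to each of the $n$ diagonal blocks (the $s\times t$ subsystems), then glue; the size accounting and the final bound $\Oh{r(nt)^2\log(ntU)+nst^2\log(tU)}=\Oh{n^2t^2rs\log(ntU)}$ coincide with the paper's. Your side remark invoking \Cref{lem:sign} is a harmless extra; the paper simply treats each bottom row-block as a sub-ILP on its $t$ effective variables without appealing to sign preservation.
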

\begin{proof}
Again, we use \Cref{thm:ilpkernel}; once for the linking constraints and then once for each of the $n$ blocks on the diagonal. So we get one static equivalent instance of size $\Oh{r(nt)^2\log(ntU)}$ and $n$ static equivalent instances of size $\Oh{st^2\log(tU)}$ each. So in total, we get a static equivalent instance of size 
\[\Oh{r(nt)^2\log(ntU)+nst^2\log(tU)}=\Oh{n^2t^2rs\log(ntU)}.\]
A vector $x^*$ is a solution of the original ILP, if and only if it is a solution of the linking constraints and its respective parts are solutions of the block-ILPs. Again, this is the case if and only if the parts are solutions of the transformed block-ILPs and linking constraints, which is equivalent to $x^*$ being a solution to the whole transformed ILP.
\end{proof}

\section{Preprocessing Algorithm for \LB}
In this section, we show the existence of equivalent instances for \tf{\P}{}{\{\cmax,\cmin,\envy\}}, in particular we proof \Cref{lem:blemma}. For this, we give a kernel for $\PCmax$. Afterwards, we show how this kernels can be transferred to kernels for \tf{\P}{}{\cmin} and \tf{\P}{}{\envy}. 

\begin{lemma}\label{lem:pcmaxk} 
For $\PCmax$, there exists a kernel where the number of jobs of a specific type on a specific machine is bounded by $8d\pmax+4d$. So the load of every machine is bounded by $8d^2\pmax^2+4d^2\pmax$. The kernelization runs in $\Oh{d}$ time.
\end{lemma}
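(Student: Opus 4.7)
My plan is to derive this kernel by invoking the balancing statement implicit in~\cite{KKLMO23}: every $\PCmax$ instance admits an optimal schedule in which, for each type $i \in [d]$, the numbers of type-$i$ jobs on any two machines differ by at most a function of $d$ and $\pmax$, which I expect to work out to roughly $B := 2d(2\pmax+1)$. Granting this, the preprocessing itself is essentially free.

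The construction would go as follows. For each type $i \in [d]$, compute $a_i := \max\{0, \lfloor n_i / m \rfloor - B\}$ and preassign $a_i$ jobs of type $i$ to every machine. Then output the residual instance on the same $m$ machines with $n'_i := n_i - m \cdot a_i$ jobs of type $i$. Since this is exactly $d$ arithmetic operations, the $\Oh{d}$ running time is immediate. The claimed load bound $8d^2\pmax^2 + 4d^2\pmax$ follows by summing $c_i^{(k)} p_i$ over $i \in [d]$ once the per-type per-machine bound $8d\pmax+4d$ is established, so everything reduces to proving that bound.

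For the per-type bound I would argue both directions of the kernel equivalence. A schedule for the kernel extends to the original instance by placing the $a_i$ preassigned type-$i$ jobs on every machine; conversely, starting from a balanced optimal schedule of the original instance, the number $c_i^{(k)}$ of type-$i$ jobs on any machine $k$ satisfies $c_i^{(k)} \le \lfloor n_i/m \rfloor + B + 1 \le a_i + 2B + 1$, so subtracting the $a_i$ preassigned jobs on each machine yields a valid kernel schedule with at most $4d(2\pmax+1) + 1$ jobs of each type per machine, which I would show fits into the stated $8d\pmax + 4d$ by tightening the constant $B$ appropriately (e.g.\ absorbing the additive $1$ from integrality of $\lfloor n_i/m \rfloor$ into a slight rebalancing of $B$).

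The main obstacle is extracting the balancing statement from~\cite{KKLMO23} with a sufficiently tight constant. Their proof leans on the huge $n$-fold IP machinery together with a decomposition of optimal solutions into small Graver-basis style building blocks; fitting the constants so that the final per-type bound matches $8d\pmax + 4d$ exactly (rather than being loose by a constant factor) will be the delicate step. Everything else is bookkeeping: the kernel is literally a shrunken instance on the same machine set, its size does not enter the argument, and the $\Oh{d}$ running time is inherited from the $d$ arithmetic operations defining the $a_i$.
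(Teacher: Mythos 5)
Your plan tracks the paper's own proof quite closely: the paper also establishes a balancing statement, preassigns a baseline number of jobs of each type per machine, and bounds the residual per-type per-machine count by $4d\,g_\infty(B)$ with $g_\infty(B)\le 2\pmax+1$ (Eisenbrand et al.~\cite{EHK18}), giving $8d\pmax+4d$. The preassignment detail differs slightly --- the paper uses $\lceil Q_{\text{frac}}\rceil - 2dg_\infty(B)$ around a fractional center, you use $\lfloor n_i/m\rfloor - B$ --- but since a per-coordinate max-min gap of $B$ together with the sum constraint $\sum_k c_i^{(k)}=n_i$ forces every $c_i^{(k)}$ into $[n_i/m - B,\, n_i/m + B]$, both centerings yield the same residual bound, and your concern about an extra $+1$ evaporates once you note $B$ is an integer.

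The genuine gap is exactly the one you flag: you treat the balancing bound $\|x^i - x^{i'}\|_\infty \le 2d\,g_\infty(B)$ as a black box ``implicit in~\cite{KKLMO23},'' but that reference does not hand you this statement in the form you need. The paper fills this in as a stand-alone lemma (Lemma~\ref{lem:balhelp}). Concretely: write the assignment ILP for a guessed makespan $T$ as an $n$-fold IP with identical horizontal block $A=(I_d\ \mathbf{0})$ and identical diagonal block $B=(p_1,\dots,p_d,1)$ (the $1$ is a slack column), so every brick satisfies $Bx^i = T$. Then for any two bricks, $x^i - x^{i'}\in\ker B$, so it decomposes into at most $2d$ sign-compatible Graver-basis elements of $B$ (Lemma 3.4 in~\cite{Onn10}). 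If $\|x^i-x^{i'}\|_\infty > 2d\,g_\infty(B)$ some coefficient in that decomposition is $\ge 2$, and moving half of each such multiple from one brick to the other (an operation that preserves feasibility and non-negativity by sign-compatibility) strictly reduces a potential function counting ``tight'' coordinates, contradicting minimality. This redistribution argument, not the huge $n$-fold machinery of~\cite{KKLMO23} directly, is what certifies your constant $B = 2d(2\pmax+1)$; without it the plan has no proof of its central claim.
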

\begin{proof}
Let $p \in \mathbb{N}^d_{>0}$ be the processing time vector, let $n \in \mathbb{N}^d_{>0}$ be the job vector and let $m \in \mathbb{N}_{>0}$ be the number of machines of a $\PCmax$ instance. Let $T$ be the guessed makespan.
Considering the assignment ILP:
    \begin{align*}
        \sum_{i \in [m]} x_j^i &= n_j, \forall j \in [d] \\
        \sum_{j \in [d]} p_j x_j^i &\le T, \forall i \in [m].
    \end{align*}
    Adding slack variables transforms the equations into inequalities.
    Note that all blocks are the same, i.e.\ we have the following $n$-fold ILP
    \begin{align*}
        \mathcal{A} x= \begin{pmatrix}
            A & \dots & A \\
            B & \dots & 0 \\
            \vdots &\ddots & \vdots \\
            0 & \dots & B \\
        \end{pmatrix} x = b
    \end{align*}
    where $A = \begin{pmatrix}
        I_d & \mathbf{O}\\
    \end{pmatrix}$ and 
    $B = (p_1, \dots, p_d,1)$.
    Also the bricks of the right hand side $b_i = T$ are the same for $i \in \{d+1,\dots,d+m\}$. Therefore, we have just one brick type in the lower part of the ILP (matrices and right hand side are equal for all bricks). In the following, we show that if $\mathcal{A} x=b$ is feasible, there exists a feasible $x := (x^1,\dots,x^m)^T$ with $x^i \in \ZZ^{d+1}$ and $\|x^i - x^{i'}\|_\infty \le 2d \ginfB = O(d \ginfB)$ for all $i,i'\in [m]$. We formulate this as general lemma since we use this also for the kernels for \tf{\P}{}{\cmin} and \tf{\P}{}{\envy}. The proof of the following \Cref{lem:balhelp} is a specialization of the proof idea of lemma 10 of~\cite{KKLMO23}.

\begin{lemma}\label{lem:balhelp}
Let $d,m \in \mathbb{N}$, let $A = \begin{pmatrix}
    I_d & \mathbf{0}
\end{pmatrix}$, let $B \in \mathbb{Z}^{s \times (d +1)}$, let $\mathcal{A}$ be the $n$-fold matrix with $m$ blocks $A$ on the horizontal and $m$ blocks $B$ on the diagonal and let $b \in \mathbb{N}^{d + s \cdot m}$ with the same vector for each block, i.e. $(b_{d+1}, \ldots, b_{d + s}) = (b_{d+s+1}, \ldots, b_{d+2s}) = \ldots = (b_{d + (m-1)\cdot s + 1}, \ldots, b_{d + m \cdot s})$. If $\mathcal{A} x=b$ is feasible, there exists a feasible $x := (x^1,\dots,x^m)^T$ with $x^i \in \ZZ^{d+1}$ and $\|x^i - x^{i'}\|_\infty \le 2d \ginfB = O(d \ginfB)$ for all $i,i'\in [m]$.
\end{lemma}
\begin{proof}
Let $d,m \in \mathbb{N}$, let $A = (I_d\ \mathbf{0})$, let $B \in \mathbb{Z}^{s \times (d +1)}$, let $\mathcal{A}$ be the $n$-fold matrix with $m$ blocks $A$ on the horizontal and $m$ blocks $B$ on the diagonal and let $b \in \mathbb{N}^{d + s \cdot m}$ with the same vector for each block, i.e. $(b_{d+1}, \ldots, b_{d + s}) = (b_{d+s+1}, \ldots, b_{d+2s}) = \ldots = (b_{d + (m-1)\cdot s + 1}, \ldots, b_{d + m \cdot s}) = t$ for $t \in \mathbb{N}^s$. Therefore, we have just one brick type in the lower part of the ILP (matrices and right hand side are equal for all bricks).
    In the following, we show that if $\mathcal{A} x=b$ is feasible, there exists a feasible $x := (x^1,\dots,x^m)^T$ with $x^i \in \ZZ^{d+1}$ and $\|x^i - x^{i'}\|_\infty \le 2d \ginfB = O(d \ginfB)$ for all $i,i'\in [m]$. We refer to $x^i$ as configurations or bricks.
    For a feasible solution $x$, let $C(x) := \{x^1,\dots,x^m\}$ be the multiset of chosen configurations.
    Let $R(x) := \max_{c,c' \in C(x)} \|c - c'\|_\infty$ be the largest infinity-norm distance between two configurations. 
    Also define $\zeta(x) \in \mathbb{Q}^{d+1}$ as the fractional center of those configurations, i.e.\ for each coordinate $j \in [d+1]$, set 
    $$\zeta_j(x) := \min_{c \in C(x)} c_j +  \nicefrac{(\max_{c\in C(x)} c_j - \min_{c \in C(x)} c_j)}{2}.$$ In particular, we get that 
    \begin{align} \label{eq: center dist}
        \|c - \zeta(x)\|_\infty \le  R(x)/2    
    \end{align}
    holds for all $c \in C(x)$. 
    We say that coordinate $j \in [d+1]$ of a configuration $c \in C(x)$ is \emph{tight}, if $|c_j - \zeta_j(x)| =  R(x)/2$ holds. Let $S(x) := \sum_{c \in C(x)} \sum_{j\in \{j|c_j \text{ is tight}\}}1$ be the total number of tight coordinates.
    
    Let $\ginfB$ be an upper bound of the infinity norm of elements of the Graver basis of $B$. 
    Set $L := 2d \ginfB$. This is our desired upper bound on the infinity-norm distance between two configurations. 
    Now, we show that there exists a feasible solution that fulfills this bound.
    Towards a proof of contradiction, assume that $x$ is a feasible solution to $\mathcal{A}x=b$ that minimizes $R(x) =: R$ (i.e.\ there does not exist a feasible solution with $\|c - c'\|_\infty < R$ for all $c \in C(x)$) and with minimal number of tight coordinates $S(x)$ (in that order).
    For contradiction, we now assume $R > L$, i.e.\ there exist two configurations $c,c' \in C(x)$ with $\|c - c'\|_\infty = R > L$. Let $i$ and $i'$ be the corresponding machines to those configurations, i.e.\ $x^i = c$ and $x^{i'} = c'$. Since $\|x^i - x^{i'}\|_\infty = R > L \ge 0$, we have that $x^i \not= x^{i'}$ and $i \not= i'$.

    First, we show that we can redistribute the jobs on the machines such that we get $\|x^i - x^{i'}\|_\infty \le L$.
    Since $B x^i = t$, $B x^{i'} = t$ holds, we get $B (x^i - x^{i'}) = t - t = \mathbf{0}$ and therefore, $x^i - x^{i'}$ is an element in the kernel of the diagonal block. 
    Lemma 3.4 in~\cite{Onn10} states that we can decompose this difference into the sum of at most $2d$ different Graver basis elements that are sign-compatible with $x^i-x^{i'}$. Therefore, we have
    \begin{align*}
        x^i - x^{i'} = \sum_{j=1}^{2d} \alpha_j g_j
    \end{align*} with $\alpha_j \in \mathbb{N}$ and $g_j \sqsubseteq x^i - x^{i'}$ for each $j \in [2d]$.
    Note that the used Graver basis elements are bounded by $\|g_j\|_\infty \le \ginfB$.
    Our assumption states $\|x^i - x^{i'}\|_\infty > 2d\ginfB$. Thus, $\alpha_j > 1$ for at least one $j \in [2d]$.
    
    Next, the jobs on the machines can be redistributed. We only consider the Graver basis elements $g_j$ with $\alpha_j >1$, i.e.\ where we can subtract the Graver basis element from one brick and add it to the other one. This corresponds to moving some jobs from one machine to another one. This can be done because after the redistribution, we still assign the same amount of jobs to the same amount of machines. Also, $g_j \sqsubseteq x^i - x^{i'}$ for all $j \in [2d]$ ensures that the jobs are on the machine they are taken from. Therefore, we set
    \begin{equation}\label{eq:redist}
        \begin{array}{ll}
            h &:= \sum_{j=1}^{2d} \lfloor\alpha_j / 2\rfloor g_j \\
            \bax^i &:= x^i - h \\
            \bax^{i'} &:= x^{i'} + h.
        \end{array}
    \end{equation}

    After the redistribution, we have 
    \begin{align*}
        \bax^i - \bax^{i'} &= x^i - x^{i'} -2h \\
        &= \sum_{j=1}^{2d} \alpha_j g_j - 2\sum_{j=1}^{2d} \lfloor\alpha_j / 2\rfloor g_j \\
        &= \sum_{j=1}^{2d} (\alpha_j - 2\lfloor\alpha_j / 2\rfloor) g_j 
    \end{align*}
    Set $\bar{\alpha}_j := \alpha_j - 2\lfloor\alpha_j / 2\rfloor$. Note that $\bax^i - \bax^{i'} = \sum_{j=1}^{2d} \bar{\alpha}_j g_j$ with $\bar{\alpha}_j \le 1$ and $g_j \sqsubseteq \bax^i - \bax^{i'}$.
    This implies $\|\bax^i - \bax^{i'}\|_\infty \le 2d\ginfB = L$.
    Due to the sign compatibility of the graver basis elements and $x^i - x^{i'}$, the vectors $\bax^i$ and $\bax^{i'}$ are non-negative. Also, since $B g_j = \mathbf{0}$ for all $j \in [2d]$ we get $Bh = \mathbf{0}$, which implies that $\bax^i$ and $\bax^{i'}$ are feasible integer solutions to $B x = t$.

    Next, we show that the total number of tight coordinates has been reduced with this redistribution. This then is a contradiction to the assumption that $S(x)$ is minimal.
    Since $\|x^i - x^{i'}\|_\infty = R$ holds, we have that there exists a coordinate $j \in [d+1]$ such that $|x_j^i - x_j^{i'}| = R$ holds. Set $\zeta:=\zeta(x)$. With the triangle inequality, we have 
    \begin{align*}
        R &= |x_j^i - x_j^{i'}| \\
        &= |x_j^i - \zeta_j + \zeta_j - x_j^{i'}| \\
        &\le |x_j^i - \zeta_j| + |\zeta_j - x_j^{i'}|
    \end{align*}
    With \eqref{eq: center dist}, we have that $|x_j^i - \zeta_j| \le R/2$ and $|\zeta_j - x_j^{i'}| \le R/2$, which now implies that both, $x_j^i$ and $x_j^{i'}$ are tight. W.l.o.g.\ assume that $x_j^i \ge x_j^{i'}$. 
    After the redistribution, we have $\|\bax^i - \bax^{i'}\|_\infty \le L < R.$ This implies $|\bax_j^i - \bax_j^{i'}| < R$. Thus, $j$ is no longer a tight variable for both, $\bax_j^i$ and $\bax_j^{i'}$. However, we still have $\bax_j^i \ge \bax_j^{i'}$. 
    More concretely, with \eqref{eq:redist} and $g_j \sqsubseteq \bax^i - \bax^{i'}$ it holds for each coordinate $k$ that 
    \begin{align*}
        x_k^i \ge x_k^{i'} &\implies \bax_k^i \ge \bax_k^{i'},\\
        x_k^i \le x_k^{i'} &\implies \bax_k^i \le \bax_k^{i'} \quad \text{and}\\
        |x_k^i - x_k^{i'}| &\ge |\bax_k^i - \bax_k^{i'}|,
    \end{align*}
    All other machines are not altered by this redistribution which then implies that no new tight coordinates are added. Further, after this redistribution, no  configuration $c''$ with $\norm{c''-c}_{\infty}>R$ or $\norm{c''-c'}_{\infty}>R$ exists. This is because, for any coordinate, they either were on the same side of $\zeta_j,$ i.e.\ $(c''_j-\zeta_j)\cdot(c_j-\zeta_j)\geq 0,$ or on opposite sides. If they were on opposite sides, their distance only got smaller and was bounded by $R$ before. If they were on the same side, their distance was bounded by $R/2$ before and remains bounded as such.

    This contradicts the assumption that $S(x)$ is minimal, which implies $R\le L$, i.e.\ $\|c - c'\|_\infty \le 2d\ginfB$ for all configurations $c,c' \in C(x).$
    \end{proof}
    
    Now, we can use this balancing result to do the following $O(d)$-time preprocessing step to achieve a kernel where at most $4d\ginfB$ jobs of each type have to be scheduled on each machine.

    Let $Q_{\text{frac}}$ be the optimal fractional schedule of the instance, i.e.\ if we allow the assignment of fractional multiplicities of jobs to machines, then all machines complete their processing at time $Q_{\text{frac}}.$
    Now, let $x$ be a feasible solution to $\mathcal{A}x=b$ with $R(x) \le L$ and $\|c - \zeta(x)\|_\infty \le L/2$ for all $c \in C(x)$. Due to \Cref{lem:balhelp}, such a solution exists.
    For each job-type $j \in [d]$, preassign $\lceil Q_{\text{frac}} \rceil - 2d\ginfB$ jobs on each machine. As the number of jobs of type $j$ may not exceed $\lfloor Q_{\text{frac}} \rfloor + 2d\ginfB$ on each machine, it remains to schedule at most $4d\ginfB$ jobs of each type on each machine.
    
    By Eisenbrand et al.~\cite{EHK18}, we know $g_1(B) \le 2\pmax+1$. This implies $\ginfB \le 2\pmax+1$. Therefore, we need to schedule at most $4d(2\pmax +1) = 8d\pmax+4d= O(d\pmax)$ jobs of each type on each machine. 
    \end{proof}

We use the same technique to get kernels for \tf{\P}{}{\cmin} and \tf{\P}{}{\envy}.

\begin{lemma}\label{lem:pcmink}
    For \tf{\P}{}{\cmin}, there exists a kernel where the number of jobs of a specific type on a specific machine is bounded by $8d\pmax+4d$. So the load of every machine is bounded by $8d^2\pmax^2+4d^2\pmax$. The kernelization runs in $\Oh{d}$ time.
\end{lemma}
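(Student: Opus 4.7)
The approach is to mirror the proof of \Cref{lem:pcmaxk} essentially verbatim, replacing the upper-bound machine-load constraint by a lower-bound one. First, I would formulate the decision version of \tf{\P}{}{\cmin} (with guessed threshold $T$) as the assignment ILP
\[
\sum_{i\in[m]} x_j^i = n_j\ \ \forall j\in[d], \qquad \sum_{j\in[d]} p_j x_j^i \geq T\ \ \forall i\in[m],
\]
and introduce nonnegative slack variables (subtracted rather than added) to turn the machine constraints into equalities. This produces an $n$-fold ILP of exactly the shape required by \Cref{lem:balhelp}, with identical horizontal blocks $A=(I_d \mid \mathbf{0})$ and identical diagonal blocks $B=(p_1,\dots,p_d,-1)$; the right-hand side of every diagonal block is the same value $T$. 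Only the sign of the slack column distinguishes this from the \tf{\P}{}{\cmax} setup.

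Next, I would apply \Cref{lem:balhelp} directly. Since its statement is abstract in $B$ and requires only a common right-hand side across diagonal blocks, it applies to our $B$ without modification and yields a feasible integer solution $x=(x^1,\dots,x^m)$ with $\|x^i-x^{i'}\|_\infty \leq 2d\,\ginfB$ for all $i,i'\in[m]$. The Graver bound $\ginfB\leq g_1(B)\leq 2\pmax+1$ from Eisenbrand et al.~\cite{EHK18} that was used in \Cref{lem:pcmaxk} carries over because it depends only on $\|B\|_\infty\leq\pmax$, and the $-1$ slack entry trivially satisfies this for $\pmax\geq 1$. Hence $\|x^i-x^{i'}\|_\infty\leq 2d(2\pmax+1)$.

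Finally, I would carry out the preassignment step verbatim: let $Q_{\text{frac}}$ denote the optimal fractional assignment, preassign for each job type $j$ and each machine $i$ the common lower part of the balanced integer solution, and observe that by the balancing bound at most $4d\,\ginfB \leq 8d\pmax+4d$ jobs of each type remain unscheduled per machine. The per-machine residual load is then bounded by $(8d\pmax+4d)\cdot d\cdot \pmax = 8d^2\pmax^2+4d^2\pmax$, matching the claimed bounds. The ILP formulation and the preassignment are each linear in the encoding and \Cref{lem:balhelp} is invoked only combinatorially, giving an overall running time of $O(d)$.

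The only substantive obstacle is to confirm that the two ingredients tailored in \Cref{lem:pcmaxk} to the $\leq$-setup of \tf{\P}{}{\cmax}, namely the statement of \Cref{lem:balhelp} and the Graver bound of~\cite{EHK18}, still apply after flipping the sign of the slack column. Both are stated abstractly enough in $B$ that they transfer immediately, so I anticipate no real technical difficulty; the proof is essentially a transcription of the \tf{\P}{}{\cmax} argument with the inequality direction reversed.
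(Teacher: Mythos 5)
Your proposal is correct and follows essentially the same argument as the paper: formulate the assignment ILP with a lower-bound threshold, convert to the $n$-fold form with $B=(p_1,\dots,p_d,-1)$, invoke \Cref{lem:balhelp} (which is agnostic to the sign of the slack column), preassign $\lceil Q_{\text{frac}} \rceil - 2d\ginfB$ jobs of each type per machine, and apply the Graver-basis bound $\ginfB\leq 2\pmax+1$ from Eisenbrand et al.~\cite{EHK18}. The paper's proof is a line-by-line transcription of the $\PCmax$ case with the inequality flipped, exactly as you anticipated.
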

\begin{proof}
     Let $p \in \mathbb{N}^d_{>0}$ be the processing time vector, let $n \in \mathbb{N}^d_{>0}$ be the job vector and let $m \in \mathbb{N}_{>0}$ be the number of machines of a \tf{\P}{}{\cmin} instance. Let $T$ be the guessed minimum completion time.
Considering the assignment ILP:
    \begin{align*}
        \sum_{i \in [m]} x_j^i &= n_j, \forall j \in [d] \\
        \sum_{j \in [d]} p_j x_j^i &\ge T, \forall i \in [m].
    \end{align*}
    Adding slack variables transforms the equations into inequalities.
    Note that all blocks are the same, i.e.\ we have the following $n$-fold ILP
    \begin{align*}
        \mathcal{A} x= \begin{pmatrix}
            A & \dots & A \\
            B & \dots & 0 \\
            \vdots &\ddots & \vdots \\
            0 & \dots & B \\
        \end{pmatrix} x = b
    \end{align*}
    where $A = \begin{pmatrix}
        I_d & \mathbf{O}\\
    \end{pmatrix}$ and 
    $B = (p_1, \dots, p_d,-1)$.
    Also the bricks of the right hand side $b_i = T$ are the same for $i \in \{d+1,\dots,d+m\}$. Therefore, we have just one brick type in the lower part of the ILP (matrices and right hand side are equal for all bricks). By \Cref{lem:balhelp}, we get that if $\mathcal{A} x=b$ is feasible, there exists a feasible $x := (x^1,\dots,x^m)^T$ with $x^i \in \ZZ^{d+1}$ and $\|x^i - x^{i'}\|_\infty \le 2d \ginfB = O(d \ginfB)$ for all $i,i'\in [m]$. We can use this balancing result to do the following $O(d)$-time preprocessing step to achieve a kernel where at most $4d\ginfB$ jobs of each type have to be scheduled on each machine.

    Let $Q_{\text{frac}}$ be the optimal fractional schedule of the instance, i.e.\ if we allow the assignment of fractional multiplicities of jobs to machines, then all machines complete their processing at time $Q_{\text{frac}}.$
    Now, let $x$ be a feasible solution to $\mathcal{A}x=b$ with $R(x) \le L$ and $\|c - \zeta(x)\|_\infty \le L/2$ for all $c \in C(x)$. Due to \Cref{lem:balhelp}, such a solution exists.
    For each job-type $j \in [d]$, preassign $\lceil Q_{\text{frac}} \rceil - 2d\ginfB$ jobs on each machine. As the number of jobs of type $j$ may not exceed $\lfloor Q_{\text{frac}} \rfloor + 2d\ginfB$ on each machine, it remains to schedule at most $4d\ginfB$ jobs of each type on each machine.
    
    By Eisenbrand et al.~\cite{EHK18}, we know $g_1(B) \le 2\pmax+1$. This implies $\ginfB \le 2\pmax+1$. Therefore, we need to schedule at most $4d(2\pmax +1) = 8d\pmax+4d= O(d\pmax)$ jobs of each type on each machine. 
\end{proof}

\begin{lemma}\label{lem:pcenv}
    For \tf{\P}{}{\envy}, there exists a kernel where the number of jobs of a specific type on a specific machine is bounded by $4d(4\pmax +1)^2$. So the load of every machine is bounded by $4d^2\pmax(4\pmax +1)^2$. The kernelization runs in $\Oh{d}$ time.
\end{lemma}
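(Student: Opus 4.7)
The proof follows the same template as the proofs of \Cref{lem:pcmaxk} and \Cref{lem:pcmink}: formulate \tf{\P}{}{\envy} (with a guessed interval $[\ell,u]$ in which every machine load must lie) as an $n$-fold ILP with identical horizontal and diagonal blocks, invoke \Cref{lem:balhelp} to obtain a balanced feasible solution, and then preassign the bulk of each job type uniformly across all machines. The only change from the previous two lemmas is that a machine now has two load constraints ($\le u$ and $\ge\ell$) instead of one, which enlarges the diagonal block $B$ and therefore the relevant Graver-basis bound.

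Concretely, after introducing one non-negative slack per inequality, the diagonal block becomes the $2\times(d+2)$ matrix
\[B=\begin{pmatrix}p_1&\cdots&p_d&1&0\\p_1&\cdots&p_d&0&-1\end{pmatrix},\]
with identical right-hand side $(u,\ell)^T$ on every machine, while the horizontal block becomes $A=(I_d\ \mathbf{0}\ \mathbf{0})$. Since every lower block of the $n$-fold matrix is the same $B$ with the same right-hand side, \Cref{lem:balhelp} applies and yields a feasible solution $x=(x^1,\ldots,x^m)$ satisfying $\|x^i-x^{i'}\|_\infty\le 2d\cdot g_\infty(B)$ for all $i,i'\in[m]$.

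It remains to bound $g_\infty(B)$. Since $B$ has $s=2$ rows with entries bounded in absolute value by $\pmax$, the natural multi-row generalization of the single-row bound of~\cite{EHK18} used for $\PCmax$ and $\PCmin$ gives $g_\infty(B)\le(2s\pmax+1)^s=(4\pmax+1)^2$. Plugging this into exactly the same $O(d)$-time preprocessing as in \Cref{lem:pcmaxk} -- preassign $\lceil Q_{\text{frac}}\rceil-2d\cdot g_\infty(B)$ jobs of each type to each machine, while at most $\lfloor Q_{\text{frac}}\rfloor+2d\cdot g_\infty(B)$ jobs fit per type per machine -- shows that at most $4d\cdot g_\infty(B)\le 4d(4\pmax+1)^2$ jobs of each type remain per machine, and hence every load is bounded by $d\cdot\pmax\cdot 4d(4\pmax+1)^2=4d^2\pmax(4\pmax+1)^2$.

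The main (and essentially only) non-routine ingredient is identifying the correct Graver-basis bound for the two-row diagonal block; once $g_\infty(B)\le(4\pmax+1)^2$ is in hand, the balancing argument of \Cref{lem:balhelp} and the preprocessing step go through verbatim, giving the claimed per-type, per-machine bound, the load bound and the $O(d)$ running time.
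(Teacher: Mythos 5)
Your proof follows exactly the same template as the paper's: encode the two load constraints as an $n$-fold ILP with identical blocks, invoke \Cref{lem:balhelp} to balance the bricks, bound the Graver basis of the diagonal block via~\cite{EHK18}, and then preassign jobs uniformly. The differences are minor and worth noting. First, your diagonal block
\[B=\begin{pmatrix}p_1&\cdots&p_d&1&0\\p_1&\cdots&p_d&0&-1\end{pmatrix}\]
uses two independent slack variables and is $2\times(d+2)$, whereas the paper writes $B$ as the $2\times(d+1)$ matrix with a single shared slack column $(1,-1)^T$. Your encoding is in fact the more careful one: with a single shared non-negative slack $s$, the two equalities $\sum_j p_jx_j+s=T_u$ and $\sum_j p_jx_j-s=T_\ell$ force the load of every machine to equal $(T_u+T_\ell)/2$ exactly, which is not what we want; two independent slacks avoid this. (The tradeoff is that \Cref{lem:balhelp} is stated for $B\in\Z^{s\times(d+1)}$ and $A=(I_d\ \mathbf{0})$; you should note that the lemma and its proof carry over verbatim to $B\in\Z^{s\times(d+2)}$ with $A=(I_d\ \mathbf{0}\ \mathbf{0})$, since nothing in the argument depends on the number of slack columns.) Second, you describe the bound $g_\infty(B)\le(2s\Delta+1)^s$ as a ``natural multi-row generalization'' of the single-row bound; in fact this is the theorem of~\cite{EHK18} already stated in full multi-row generality, so no generalization is needed on your part -- just cite it directly, as the paper does with $g_1(B)\le(4\pmax+1)^2$ and hence $g_\infty(B)\le g_1(B)$. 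With those two small clarifications your argument is complete and matches the paper's.
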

\begin{proof}
     Let $p \in \mathbb{N}^d_{>0}$ be the processing time vector, let $n \in \mathbb{N}^d_{>0}$ be the job vector and let $m \in \mathbb{N}_{>0}$ be the number of machines of a \tf{\P}{}{\envy} instance. Let $T_{\ell}$ be the guessed lower bound on the completion time and let $T_u$ be the guessed upper bound on the completion time.
Considering the assignment ILP:
    \begin{align*}
        \sum_{i \in [m]} x_j^i &= n_j, \forall j \in [d] \\
        \sum_{j \in [d]} p_j x_j^i &\le T_u, \forall i \in [m] \\
        \sum_{j \in [d]} p_j x_j^i &\ge T_{\ell}, \forall i \in [m].
    \end{align*}
    Adding slack variables transforms the equations into inequalities.
    Note that all blocks are the same, i.e.\ we have the following $n$-fold ILP
    \begin{align*}
        \mathcal{A} x= \begin{pmatrix}
            A & \dots & A \\
            B & \dots & 0 \\
            \vdots &\ddots & \vdots \\
            0 & \dots & B \\
        \end{pmatrix} x = b
    \end{align*}
    where $A = \begin{pmatrix}
        I_d & \mathbf{O}\\
    \end{pmatrix}$ and 
    $B = \begin{pmatrix}
        p_1 & \ldots & p_d & 1 \\
        p_1 & \ldots & p_d & -1
    \end{pmatrix}$.
    Also the bricks of the right hand side $b^i = \begin{pmatrix}
        T_u \\ T_{\ell}
    \end{pmatrix}$ are the same for $i \in \{d+1,\dots,d+m\}$. Therefore, we have just one brick type in the lower part of the ILP (matrices and right hand side are equal for all bricks). By \Cref{lem:balhelp}, we get that if $\mathcal{A} x=b$ is feasible, there exists a feasible $x := (x^1,\dots,x^m)^T$ with $x^i \in \ZZ^{d+1}$ and $\|x^i - x^{i'}\|_\infty \le 2d \ginfB = O(d \ginfB)$ for all $i,i'\in [m]$. We can use this balancing result to do the following $O(d)$-time preprocessing step to achieve a kernel where at most $4d\ginfB$ jobs of each type have to be scheduled on each machine.

    Let $Q_{\text{frac}}$ be the optimal fractional schedule of the instance, i.e.\ if we allow the assignment of fractional multiplicities of jobs to machines, then all machines complete their processing at time $Q_{\text{frac}}.$
    Now, let $x$ be a feasible solution to $\mathcal{A}x=b$ with $R(x) \le L$ and $\|c - \zeta(x)\|_\infty \le L/2$ for all $c \in C(x)$. Due to \Cref{lem:balhelp}, such a solution exists.
    For each job-type $j \in [d]$, preassign $\lceil Q_{\text{frac}} \rceil - 2d\ginfB$ jobs on each machine. As the number of jobs of type $j$ may not exceed $\lfloor Q_{\text{frac}} \rfloor + 2d\ginfB$ on each machine, it remains to schedule at most $4d\ginfB$ jobs of each type on each machine.
    
    By Eisenbrand et al.~\cite{EHK18}, we know $g_1(B) \le (4\pmax+1)^2$. This implies $\ginfB \le (4\pmax+1)^2$. Therefore, we need to schedule at most $4d(4\pmax +1)^2 = O(d\pmax^2)$ jobs of each type on each machine. 
\end{proof}

\balance*
\begin{proof}
Since $8d\pmax + 4d \leq 4d(4\pmax +1)^2$, this follows directly by \Cref{lem:pcmaxk}, \Cref{lem:pcmink} and \Cref{lem:pcenv}.
\end{proof}

\end{document}